\documentclass[twocolumn]{article}
\usepackage{mglgrid}

\usepackage{scalerel}[2016/12/29]
\usepackage{stackengine}
\usepackage{nicefrac}
\usepackage[normalem]{ulem}
\usepackage{siunitx}
\usepackage{mathtools}
\usepackage{amsmath}
\usepackage{amsthm}
\usepackage{thmtools}
\usepackage{thm-restate}
\usepackage{balance}

\newcommand*\lisp[1]{\texttt{#1}}

\allowdisplaybreaks
\let\leq\leqslant

\renewcommand{\neg}{\protect\hstretch{0.8}{-}}

\DeclarePairedDelimiter{\floor}{\lfloor}{\rfloor}
\DeclarePairedDelimiter{\abs}{\lvert}{\rvert}

\newcommand*{\natnum}{\mathbb{N}}
\newcommand*{\natnumzero}{\mathbb{N}_0}
\newcommand{\E}{\mathop{{}\mathbb{E}}}
\newtheorem{theorem}{Theorem}
\newtheorem{definition}[theorem]{Definition}

\newcommand\nobreakpar{\par\nobreak\@afterheading}
\makeatother

\newcommand\pig[1]{\scalerel*[5.5pt]{\Big#1}{%
  \ensurestackMath{\addstackgap[1.55pt]{\big#1}}}}
\newcommand\pigl[1]{\mathopen{\pig{#1}}}
\newcommand\pigr[1]{\mathclose{\pig{#1}}}

\newcommand\Pig[1]{\scalerel*[5.5pt]{\bigg#1}{%
  \ensurestackMath{\addstackgap[1.55pt]{\Big#1}}}}
\newcommand\Pigl[1]{\mathopen{\Pig{#1}}}
\newcommand\Pigr[1]{\mathclose{\Pig{#1}}}

\newcommand\pigg[1]{\scalerel*[5.5pt]{\Bigg#1}{%
  \ensurestackMath{\addstackgap[1.55pt]{\bigg#1}}}}
\newcommand\piggl[1]{\mathopen{\pigg{#1}}}
\newcommand\piggr[1]{\mathclose{\pigg{#1}}}

\usepackage{color}
\usepackage{xcolor}
\definecolor{mglred}{RGB}{165,30,30}
\definecolor{mglgreen}{RGB}{37,125,49}
\definecolor{mglblue}{RGB}{65,48,163}

\usepackage{booktabs}
\usepackage{multirow}
\usepackage{algorithm}

\usepackage{array}
\newcolumntype{L}[1]{>{\raggedright\let\newline\\\arraybackslash
    \hspace{0pt}}m{#1}}
\newcolumntype{C}[1]{>{\centering\let\newline\\\arraybackslash
    \hspace{0pt}}m{#1}}
\newcolumntype{R}[1]{>{\raggedleft\let\newline\\\arraybackslash
    \hspace{0pt}}m{#1}}

\usepackage{caption}
\usepackage{tikz}
\usepackage{pgfplots,tikz,pgfplotstable}
\pgfplotsset{compat=1.9}
\usetikzlibrary{positioning, fit, arrows.meta, shapes}
\newcommand\tikzpicturedependsonfile[1]{}

\pgfplotsset{
    legend image with text/.style={
        legend image code/.code={%
            \node[anchor=center] at (0.3cm,0cm) {#1};
        }
    },
}
\pgfplotsset{uniformhash/.style = {black!50!white, loosely dotted, very thick}}
\pgfplotsset{murmurhash/.style = {teal!70!black, densely dotted, very thick}}
\pgfplotsset{prefuzzhash/.style = {blue!70!black, densely dashed, thick}}
\pgfplotsset{constanthash/.style = {orange!70!white, densely dashdotted, thick}}
\pgfplotsset{adaptivehash/.style = {red!70!black, semithick}}

\pgfplotsset{%
  redweak/.style = {red!70!black,densely dotted,
                    mark=o, mark options={scale=0.7,solid}}}
\pgfplotsset{%
  redhollow/.style = {red!100!white,densely dashed,
                      mark=o, mark options={scale=1.0,solid}}}
\pgfplotsset{%
  redsolid/.style = {red!50!white,mark=*,mark options={scale=1.0,solid}}}

\pgfplotsset{%
  orangeweak/.style = {orange!70!black,densely dotted,mark=diamond,
                       mark options={scale=1.2*0.7,solid}}}
\pgfplotsset{%
  orangehollow/.style = {orange!100!white,densely dashed,mark=diamond,
                         mark options={scale=1.2,solid}}}
\pgfplotsset{%
  orangesolid/.style = {orange!50!white,mark=diamond*,
                        mark options={scale=1.2,solid}}}

\pgfplotsset{%
  violetweak/.style = {violet!70!black,densely dotted,mark=diamond,
                       mark options={scale=1.2*0.7,solid}}}
\pgfplotsset{%
  violethollow/.style = {violet!100!white,densely dashed,mark=diamond,
                         mark options={scale=1.2,solid}}}
\pgfplotsset{%
  violetsolid/.style = {violet!50!white,mark=diamond*,
                        mark options={scale=1.2,solid}}}

\pgfplotsset{%
  blueweak/.style = {blue!70!black,densely dotted,mark=square,
                     mark options={scale=0.8*0.7,solid}}}
\pgfplotsset{%
  bluehollow/.style = {blue!100!white,densely dashed,mark=square,
                       mark options={scale=0.8,solid}}}
\pgfplotsset{%
  bluesolid/.style = {blue!50!white,mark=square*,
                      mark options={scale=0.8,solid}}}

\pgfplotsset{%
  tealweak/.style = {teal!70!black,densely dotted,mark=triangle,
                     mark options={scale=1.2*0.7,solid}}}
\pgfplotsset{%
  tealhollow/.style = {teal!100!white,densely dashed,mark=triangle,
                       mark options={scale=1.2,solid}}}
\pgfplotsset{%
  tealsolid/.style = {teal!50!white,mark=triangle*,
                      mark options={scale=1.2,solid}}}

\pgfplotscreateplotcyclelist{acycle}{%
solid, every mark/.append style={solid, fill=gray}, mark=*\\%
dotted, every mark/.append style={solid, fill=gray}, mark=square*\\%
densely dotted, every mark/.append style={solid, fill=gray}, mark=otimes*\\%
loosely dotted, every mark/.append style={solid, fill=gray}, mark=triangle*\\%
dashed, every mark/.append style={solid, fill=gray},mark=diamond*\\%
loosely dashed, every mark/.append style={solid, fill=gray},mark=*\\%
densely dashed, every mark/.append style={solid, fill=gray},mark=square*\\%
dashdotted, every mark/.append style={solid, fill=gray},mark=otimes*\\%
dashdotdotted, every mark/.append style={solid},mark=star\\%
densely dashdotted,every mark/.append style={solid, fill=gray},mark=diamond*\\%
}

\usepackage[hidelinks,linktoc=all]{hyperref}
\usepackage{hypcap}
\usepackage{bookmark}
\usepackage[nameinlink]{cleveref}
\usepackage[noend]{algpseudocode}
\usepackage{algorithmicx}
\newcommand*\Let[2]{\State #1 $\gets$ #2}
\algdef{SE}[DOWHILE]{Do}{doWhile}{\algorithmicdo}[1]{\algorithmicwhile\ #1}%

\title{Adaptive Hashing}
\subtitle{Faster Hash Functions with Fewer Collisions*}

\author{\authorname{G\'abor Melis}\\
  \authoremail{melisgl@google}.\authoremail{com}\\
  \authororg{Google DeepMind}}

\begin{document}

\maketitle

\abstract{Hash tables are ubiquitous, and the choice of hash function, which maps a key to a bucket, is key to their performance.
We argue that the predominant approach of fixing the hash function for the lifetime of the hash table is suboptimal and propose adapting it to the current set of keys.
In the prevailing view, good hash functions spread the keys ``randomly'' and are fast to evaluate.
General-purpose ones (e.g. Murmur) are designed to do both while remaining agnostic to the distribution of the keys, which limits their bucketing ability and wastes computation.
When these shortcomings are recognized, one may specify a hash function more tailored to some assumed key distribution, but doing so almost always introduces an unbounded risk in case this assumption does not bear out in practice.
At the other, fully key-aware end of the spectrum, Perfect Hashing algorithms can discover hash functions to bucket a given set of keys optimally, but they are costly to run and require the keys to be known and fixed ahead of time.
Our main conceptual contribution is that adapting the hash table's hash function to the keys online is necessary for the best performance, as adaptivity allows for better bucketing of keys \emph{and} faster hash functions.
We instantiate the idea of online adaptation with minimal overhead and no change to the hash table API.
The experiments show that the adaptive approach marries the common-case performance of weak hash functions with the robustness of general-purpose ones.}


\section{Introduction}

{
\renewcommand*{\thefootnote}{*}
\begin{NoHyper}
\footnotetext{\dots Especially in Certain Situations}
\end{NoHyper}
}

Hash tables \citep{luhn1953new,enwiki:1207615479} map keys to values and are one of the most fundamental data structures.
As such, their performance is of considerable interest.
For example, \Citet{hentschel2022entropy} claimed in 2022 that ``a typical complex database query (TPC-H) could spend 50\% of its total cost in hash tables, while Google spends at least 2\% of its total computational cost across all systems on C++ hash tables''.
At that scale, even small gains in this area can have huge impact.
This work aims to provide a general framework to improve the performance of hash tables in practice.
Theory mostly concerns itself with the unknown key distribution setting and the cost of key lookup abstracted as the number of key comparisons required.
This approach is highly successful due to its wide applicability and for being a reasonable model \emph{asymptotically} for a certain class of hash functions.
However, practice stubbornly happens in the non-asymptotic regime with particular keys on computers with memory caches.
We propose a method to equip hash tables with the ability to change their hash function based on the keys being added to improve their overall performance by bucketing the keys more evenly, making the hash function faster or more cache-friendly.
We are not aware of a cost model that incorporates these effects and is amenable to theoretical analysis, hence this work tilts heavily towards Experimental Algorithmics \citep{moret1999towards}.

\begin{algorithm}[t]
  \caption{\small Sketch of a possible adaptation mechanism implemented in $\mathit{put}$ with low overhead.
The unchanged parts of a simple $\mathit{put}$ implementation are \textcolor{gray!100}{grayed out}, while extension points are marked in \textcolor{mglred}{red}.
The hash function \emph{hash\_fn} may be adapted when there are too many keys in the same bucket, or when rehashing finds that the number of collisions is too high.
Alternatively, the total cost of access (\Cref{def:cost-of-hashes}) could be tracked, requiring an additional write to memory and also a performance penalty to key deletion.}
  \label{alg:puthash}
  \hrule
  \begin{algorithmic}[1]
    \color{gray!100}
    \Function{$\operatorname{put}$}{$\mathit{key}$, $\mathit{value}$}
      \Let{$\mathit{bucket}$}{$\operatorname{hash\_fn}(\mathit{key}) \bmod m$}
      \color{black}
      \Let{$\mathit{chain\_length}$}{$0$}
      \color{gray!100}
      \For{$k$ $\gets$ next key in $\mathit{bucket}$}
        \If{$\operatorname{compare}(\mathit{key}, k)$}
          \Let{value of $k$}{$\mathit{value}$}
          \State\Return
        \EndIf
        \color{black}
        \Let{$\mathit{chain\_length}$}{$\mathit{chain\_length} + 1$}
        \color{gray!100}
      \EndFor
      \color{black}
      \If{$\mathit{chain\_length}$ \textcolor{mglred}{too high}}
        \Let{$\operatorname{hash\_fn}$}{$\textcolor{mglred}{\operatorname{safer\_hash\_function}}(\operatorname{hash\_fn})$}
        \State{$\operatorname{rehash\textcolor{mglred}{\_and\_maybe\_adapt}}()$}
        \Let{$\mathit{bucket}$}{$\operatorname{hash\_fn}(\mathit{key}) \bmod m$}
      \EndIf
      \color{gray!100}
      \If{hash table is full}
        \State{double $m$ and increase storage}
        \color{black}
        \State{$\textcolor{gray}{%
            \operatorname{rehash\textcolor{mglred}{\_and\_maybe\_adapt}}()}$}
        \If{$\operatorname{hash\_fn}$ was changed}
          \Let{$\mathit{bucket}$}{$\operatorname{hash\_fn}(\mathit{key}) \bmod m$}
        \EndIf
        \color{gray!100}
      \EndIf
      \State{add $(\mathit{key}, \mathit{value})$ to $\mathit{bucket}$}
    \EndFunction
  \end{algorithmic}
  \hrule
\end{algorithm}

It is useful to contrast our method with hand-crafting hash functions for particular key distributions.
We argue that hand-crafting is too rigid: it breaks badly if the distributional assumption is violated \citep{jdkhashbug,jdkstringhashsubsampling}.
It is also costly in terms of human labour, and designing hash functions with guarantees is hard.
While selecting the hash function offline amortizes the design cost, it also attempts to solve a much harder problem than necessary in balancing the complexity and the quality of the hash function across all possible sets of keys.

A less rigid solution is to select a hash function \emph{online} given the actual keys in the hash table.
This pulls the hash function selection cost into the runtime realm, and we must be extremely cautious of introducing overhead lest any possible gains be wasted.
Here, we propose hiding some of the selection cost in rehashing, when the hash table grows.
To ground the exposition, \Cref{alg:puthash} sketches the implementation of a possible adaptation mechanism.
In later sections, we flesh out this skeletal algorithm by instantiating the logic behind the extension points marked in red for deciding when the chain length is considered too high, the fallback mechanism to a safer hash, and the adaptation in rehash.

In summary, our main contributions are:
\begin{itemize}
\item We argue that hash functions must adapt to the keys in the hash table for best performance.
In \Cref{sec:the-case-for-adaptive-hashing}, we quantify the theoretical cost ignoring the actual keys.
\item We propose a light-weight adaptation mechanism tailored to string hashing and achieve large gains by ensuring that the hash function performs only the minimum work necessary in \Cref{sec:half-an-example}.
\item We propose a robust solution for integer/pointer keys that simultaneously achieves faster hashing and fewer collisions in \Cref{sec:pointer-keys}.
\end{itemize}
The proposed algorithms were implemented within SBCL \citep{newman1999sbcl}, a high-performance Common Lisp.
SBCL hash tables are reasonably fast given the constraints of the ANSI standard \citep{steele1990common} but not state of the art.
We microoptimized the baseline SBCL v2.4.2 to make performance comparisons fairer.


\section{The Case for Adaptive Hashing}
\label{sec:the-case-for-adaptive-hashing}

In this section, we present the traditional cost model in hash tables, based on the number of comparisons, then characterize how much being key-agnostic costs in these terms.
We denote the number of buckets with $m$, the number of keys with $n$, the keys (assumed to be integers) with $k$, hash values with $h \in \natnumzero$, buckets with $b \in [0,m-1]$, and say that hash $h$ falls into bucket $b$ if $h \bmod m = b$.
We write $h_{1:n}$ to refer to a vector of a certain size.

\begin{definition}[Bucket Count]
For a given set of hash values $h_{1:n}$ and $m$ buckets, we denote the bucket count vector with $c(h_{1:n}, m) \in (\natnumzero)^m$, where $c(h_{1:n}, m)_b = \vert\{i \colon i \in [1, n], h_i \bmod m = b\}\vert$ is the number of hashes falling into bucket $b$, for all $b \in [0,m-1]$.
\end{definition}

Next, we define the cost of hash values at a given number of buckets to be the expected number of comparisons one has to make to find the value associated with a key present.
For a single bucket with $c_b$ hashes, this is $(c_b+1)/2$ assuming a uniform distribution over the keys being looked up.

\begin{definition}[Cost of Hashes]
\label{def:cost-of-hashes}
The cost of hashes $h_{1:n}$ with $m$ buckets is $C(c) = n^{-1} \sum_{b=0}^{m-1} c_b(c_b+1)/2$, using the shorthand $c=c(h_{1:n}, m)$.
\end{definition}
\noindent Note that this definition differs from \emph{hash function quality} in the Dragon Book \citep{dragonbook} only in the normalization.

We define perfect hashes as those that fill all buckets as equally as possible.
This generalizes the classic definition \citep{fredman1984storing}, which requires no collisions, to the space-restricted setting of $m<n$.
\begin{definition}[Perfect Hash]
We say the hashes $h_{1:n}$ are perfectly distributed in $m$ buckets if $m - (n \bmod m)$ buckets have $\floor{n/m}$ hashes in them, and $n \bmod m$ buckets have $\floor{n/m} + 1$ hashes in them.
\end{definition}

\vspace{-\groundskip}
\begin{restatable}{proposition}{propperfecthasheshaveminimalcost}[Minimal Cost]
Let $U(n,m)$ be the bucket count vector of any perfect hash of $n$ keys and $m$ buckets.
Let $q=\floor{n/m}$ and $r=n \bmod m$.
Then,
\begin{align*}
C(U(n,m)) = (m-r)\frac{q(q+1)}{2n} + r\frac{(q+1)(q+2)}{2n},
\end{align*}
and this cost is minimal.
\end{restatable}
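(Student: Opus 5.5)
The formula follows by direct substitution into \Cref{def:cost-of-hashes}. By the definition of a perfect hash, $U(n,m)$ has exactly $m-r$ entries equal to $q$ and $r$ entries equal to $q+1$. The term $c_b(c_b+1)/2$ is therefore $q(q+1)/2$ for the first group and $(q+1)(q+2)/2$ for the second. Summing and dividing by $n$ gives the stated expression. As a consistency check, $(m-r)q + r(q+1) = mq + r = n$, so such a vector really does describe $n$ hashes.

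For minimality, I will show that any bucket count vector $c\in(\natnumzero)^m$ with $\sum_b c_b = n$ has $C(c)\ge C(U(n,m))$. Since $n$ is fixed, minimizing $C(c)$ is the same as minimizing $\sum_b c_b(c_b+1)/2 = \tfrac12\sum_b c_b^2 + n/2$, which is the same as minimizing $\sum_b c_b^2$ over nonnegative integer vectors with fixed sum $n$.

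The plan is a discrete balancing (exchange) argument. Suppose $c$ is not perfect. Then it has two buckets with $c_i \ge c_j + 2$: if every pair differed by at most $1$, all entries would lie in $\{t, t+1\}$ for some $t$, and the sum constraint would force $t=q$ with exactly $r$ entries equal to $q+1$, i.e.\ a perfect hash. Moving one hash from bucket $i$ to bucket $j$ changes $\sum c_b^2$ by
\begin{align*}
(c_i-1)^2 + (c_j+1)^2 - c_i^2 - c_j^2 = 2(c_j - c_i + 1) \le -2 < 0 .
\end{align*}
So the move strictly decreases the cost. The quantity $\sum c_b^2$ is a nonnegative integer, so repeated moves must terminate, and they can only terminate at a perfect vector. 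Hence every vector has cost at least that of some perfect vector. All perfect vectors share the same multiset of entries and thus the same cost, namely the formula above. This proves minimality, and also that the minimizers are exactly the perfect hashes.

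The only mildly delicate step is the characterization "all pairwise differences at most $1$ implies perfect". It rests on the fact that $mt \le n < m(t+1)$ pins down $t=\floor{n/m}$; the edge case where all entries equal $t+1$ is excluded because it would give a sum of $m(t+1) > n$, or else can be relabelled with $t$ replaced by $t+1$. Everything else is routine arithmetic. An alternative route is convexity of $x\mapsto x(x+1)/2$ together with a majorization argument, but the exchange argument is more elementary.
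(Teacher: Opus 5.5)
Your proposal is correct and takes essentially the same route as the paper. The formula comes from direct substitution, and minimality from the exchange argument: moving one hash from a bucket with $c_i > c_j + 1$ to bucket $j$ strictly lowers the cost, and all perfect vectors share the same cost. Your version is a little more careful than the paper's, since you state termination of the repeated moves explicitly, and your change in $\sum_b c_b^2$ is exact, whereas the paper's displayed identity for $2n(C(c)-C(c'))$ drops a factor of $2$ (harmless for the sign).
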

\noindent For the proof of this and other propositions, see \Cref{sec:proofs}.
In the illustrative special case of an integer load factor $n/m$, we have that $n=qm$ (i.e. $r = 0$), the counts will be the same for all buckets, and $C(U(n,m)) = m\frac{q(q+1)}{2n} = \frac{q+1}{2}$.

Since perfect hashes have minimal cost, we define the regret of a hash the excess cost over that.
\begin{definition}[Regret of Hashes]
\label{def:regret}
The regret of hashes $h_{1:n}$ is
\begin{align*}
R(h_{1:n},m) = C(c(h_{1:n},m))- C(U(n,m)).
\end{align*}
\end{definition}

Note that the classic cost model in \Cref{def:cost-of-hashes} is simple but clearly wrong if hashes of keys are cached (see \Cref{alg:puthash}) because it costs one memory access to look up the cached hash for a key, but the cost of key comparison may be much higher.
Still, with random hashes of many bits, this distinction becomes moot for regret because only one cached hash is likely to match, so there will be a single comparison for each lookup, and their contributions to $C(c(h_{1:n},m))$ and $C(U(n,m))$ cancel out.
Thus, with cached hashes, the regret can be interpreted to be in terms of memory access.

Hash functions strive to mimic the uniform hash \citep{10.5555/1213024}, which assigns keys to buckets with equal probability.
We consider the cost of this ideal next.

\begin{restatable}{proposition}{propexpectedcostofuh}[Expected Cost of the Uniform Hash]
\label{prop:expected-cost-of-uh}
Let $P$ be a uniform distribution over functions that map keys to buckets.
Then,
\begin{align*}
\E_{\pi_{1:n} \sim P} C(c([\pi_1(k_1), \dots \pi_n(k_n)],m)) = 1+\frac{n-1}{2m},
\end{align*}
where $\pi_{1:n}$ are $n$ independent samples from $P$.
\end{restatable}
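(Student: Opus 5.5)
We reduce the expected cost to a sum of second moments of the bucket counts and evaluate those via indicator variables. Under $P$, each $\pi_i(k_i)$ is uniform on $[0,m-1]$, and the $n$ values are mutually independent because the $\pi_i$ are independent samples. We treat $\pi_i(k_i)$ directly as the bucket index. If $\pi_i$ maps into hash values instead, we only need $\pi_i(k_i)\bmod m$ to be uniform, which is exactly what "assigns keys to buckets with equal probability" provides.

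Write $c_b=\sum_{i=1}^n X_{ib}$, where $X_{ib}=\mathbf{1}[\pi_i(k_i)=b]$. By \Cref{def:cost-of-hashes} and $\sum_b c_b = n$, we have
\begin{align*}
C(c) = \frac{1}{2n}\sum_{b=0}^{m-1} c_b^2 + \frac{1}{2n}\sum_{b=0}^{m-1} c_b = \frac{1}{2n}\sum_{b=0}^{m-1} c_b^2 + \frac12 .
\end{align*}
Only $\E\sum_b c_b^2$ is then needed.

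Expand $c_b^2=\sum_i X_{ib}^2+\sum_{i\neq j}X_{ib}X_{jb}$. Since $X_{ib}^2=X_{ib}$, we get $\sum_b\sum_i X_{ib}=n$ deterministically. For $i\neq j$, independence gives $\E X_{ib}X_{jb}=1/m^2$. Summing over the $m$ buckets and the $n(n-1)$ ordered pairs yields $n(n-1)/m$. So $\E\sum_b c_b^2 = n + n(n-1)/m$. An equivalent route: $c_b\sim\mathrm{Binomial}(n,1/m)$, so $\E c_b^2 = n/m(1-1/m)+n^2/m^2$, which gives the same total.

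Substituting, $\E C = \frac{1}{2n}\bigl(n+\frac{n(n-1)}{m}\bigr)+\frac12 = 1+\frac{n-1}{2m}$. The only delicate point is the modelling step: the claim depends on the $n$ bucket assignments being pairwise independent and uniform. Once that is granted from the definition of $P$, the rest is a routine moment computation. Only pairwise independence is used, so the result also holds for any 2-universal-like family that gives uniform, pairwise independent bucket assignments.
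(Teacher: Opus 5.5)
Your proof is correct, but it takes a different route from the paper's. The paper argues in insertion order: looking up the $i$th inserted key (counting from $0$) costs one comparison plus the number of earlier keys in its bucket, which is $i/m$ in expectation, so the cost is $n^{-1}\sum_{i=0}^{n-1}(1+i/m)$. You instead start from \Cref{def:cost-of-hashes}, rewrite $C(c)=\frac{1}{2n}\sum_b c_b^2+\frac12$, and compute the second moment with indicators.

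At heart both are the same linearity-of-expectation count of colliding pairs. The paper's $\sum_i i/m=\binom{n}{2}/m$ counts unordered pairs, while your $n(n-1)/m$ counts ordered pairs and is halved by the factor $\frac{1}{2n}$.

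Your version is more self-contained. The paper's one-liner implicitly identifies $c_b(c_b+1)/2$ with $1+2+\dots+c_b$, the sum of chain positions in insertion order, and never says so. Your version also shows exactly which hypothesis is used: only that each pair of keys collides with probability exactly $1/m$, which is even weaker than the pairwise independence you state.

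The paper's insertion-order form has its own payoff: it is reused almost verbatim in the Pointer-Mix proof, where keys arrive page by page and same-page keys cannot collide. In your framework that case amounts to setting the collision probability of same-page pairs to zero. You would then get $\E\sum_b c_b^2=n+n(n-u)/m$ and recover the same $1+\frac{n-u}{2m}$.
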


The uniform hash is optimal among hashes that are functions of a single key.
However, its cost is clearly worse when compared to a perfect hash, which can be viewed as having knowledge of all keys.
Next, we characterize its regret, assuming that $m \mid n$, for convenience.

\begin{restatable}{proposition}{propexpectedregretofuh}[Expected Regret of the Uniform Hash]
\label{prop:regret-of-uh}
For all load factors $q \in \natnum$ ($n=qm$), the expected regret of the uniform hash is $0.5 + \frac{1}{m}$.
\end{restatable}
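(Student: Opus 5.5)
The plan is to compute the expected regret directly from \Cref{def:regret}, using the two earlier results for the two terms. By linearity of expectation, and because $C(U(n,m))$ is deterministic,
\begin{align*}
\E R = \E_{\pi_{1:n}\sim P} C\bigl(c([\pi_1(k_1),\dots,\pi_n(k_n)],m)\bigr) - C(U(n,m)).
\end{align*}
The first term comes from \Cref{prop:expected-cost-of-uh}, which gives $1+\frac{n-1}{2m}$. The second comes from the Minimal Cost proposition with $r=0$, which gives $\frac{q+1}{2}$, as already noted in the text after that proposition.

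Next I substitute $n=qm$ in the first term, obtaining $1+\frac{qm-1}{2m}=1+\frac{q}{2}-\frac{1}{2m}$. Subtracting $\frac{q+1}{2}$ cancels the $q$-dependence exactly, which explains why the regret is the same for every load factor $q\in\natnum$. Only a constant $\tfrac12$ and a term of order $1/m$ survive, and this should be matched against the claimed $0.5+\frac1m$.

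As a sanity check that does not rely on \Cref{prop:expected-cost-of-uh}, I would recompute the expected cost from scratch. Write $c_b=\sum_i X_{ib}$, where $X_{ib}$ indicates that key $i$ lands in bucket $b$. Then $\E c_b = n/m$ and $\E c_b^2 = n/m + n(n-1)/m^2$. Summing $\E\, c_b(c_b+1)/2$ over the $m$ buckets and dividing by $n$ gives the expected cost. Redoing this computation independently guards against an off-by-one in either the $+1$ of $c_b(c_b+1)/2$ or the $n^{-1}$ normalization.

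The main obstacle is the lower-order $1/m$ term. The leading $\tfrac12$ is robust, but the sign and coefficient of the $1/m$ correction depend delicately on the $-1$ in $n-1$ and on how $C(U(n,m))$ is normalized. I will therefore track that term explicitly through both routes. I expect the direct substitution to give $-\frac{1}{2m}$ rather than the stated $+\frac{1}{m}$. If it does, the constant in the statement needs to be reconciled with \Cref{prop:expected-cost-of-uh} before the claim can be asserted as written, since the two routes must agree.
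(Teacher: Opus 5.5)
Your route is the paper's. Its proof also writes the expected regret as $1+\frac{qm-1}{2m}-\frac{q+1}{2}$, using the expected cost of the uniform hash and the minimal cost with $r=0$. Your arithmetic is right and the paper's is not. The paper asserts that this expression equals $0.5+\frac{1}{m}$, but it simplifies to $\frac{1}{2}-\frac{1}{2m}$. Since $-\frac{1}{2m}\neq\frac{1}{m}$ for every $m$, the proposition as stated is false, and no proof of it can exist. You should drop the conditional phrasing and say so outright.

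There is also nothing delicate about the $1/m$ term, so you don't need to treat it as the main obstacle. The $-1$ in $n-1$ is confirmed by your own $\E c_b^2$ computation, which gives exactly $1+\frac{n-1}{2m}$. Likewise, $C(U(qm,m))=\frac{q+1}{2}$ follows immediately from the definition.

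Two small cases check the result without relying on either derivation:
\begin{itemize}
\item At $m=1$ every assignment is perfect, so the regret is identically $0$. This matches $\frac{1}{2}-\frac{1}{2m}$ and rules out $\frac{1}{2}+\frac{1}{m}=\frac{3}{2}$.
\item At $m=2$, $q=1$, the expected cost is $\frac{1}{2}\cdot 1+\frac{1}{2}\cdot\frac{3}{2}=\frac{5}{4}$ against a perfect cost of $1$. The regret is $\frac{1}{4}$, again matching $\frac{1}{2}-\frac{1}{2m}$.
\end{itemize}

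The paper's qualitative conclusion of roughly half a wasted comparison per lookup survives. The corrected statement should read $\frac{1}{2}-\frac{1}{2m}$, which is slightly below $0.5$ rather than above it.
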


So, the uniform hash wastes $0.5$ comparisons per lookup compared to a perfect hash because only considers a single actual key.
This may be worth improving, especially if the hash function can be made faster at the same time.


\section{Half an Example}
\label{sec:half-an-example}

\begin{algorithm}[t]
  \caption{\small Hashing strings with a $\mathit{limit}$ on the number of characters taken into account.
The algorithm moves inwards from the two ends of the string because those tend to be the most informative and because this scheme can be easily extended to reuse a previously computed hash with a lower limit.
The function $\mathit{add\_char}$ performs one step of the FNV-1A algorithm.}
  \label{alg:string-hash}
  \hrule
  \begin{algorithmic}[1]
    \Function{$\operatorname{hash\_string}$}{$s$, $\mathit{limit}$}
      \Let{$h$}{$\operatorname{len}(s)$}
      \Comment{Initialize the hash to the length}
      \Let{$a, b$}{$0, \operatorname{len}(s)-1$}
      \Let{$n$}{$\min(\mathit{limit}, h)$}
      \While{$a < (n \gg 1)$}
        \State{$h \gets \operatorname{add\_char}(h, s[a])$}
        \State{$h \gets \operatorname{add\_char}(h, s[b])$}
        \State{$a, b \gets a+1, b-1$}
      \EndWhile
      \If{$n \bmod 2 = 1$}
        \Comment{Add the odd middle char}
        \State{$h \gets \operatorname{add\_char}(h, s[a])$}
      \EndIf
      \State\Return{$h$}
    \EndFunction
  \end{algorithmic}
  \hrule
\end{algorithm}

In related work, \citet{hentschel2022entropy} engage with one half of this problem: they select high-entropy parts of the key to feed to a general-purpose hash function.
Their approach can speed up the hash function but cannot reduce the expected number of collisions.
Crucially, once a hash function has been learned in an offline manner for a given key distribution, it remains fixed for the lifetime of the hash table.
In a similar vein but adapting the hash function on the fly, we demonstrate significant speedups on string hashing even with slightly more collisions.

In particular, we hash only a subset of the data in compound keys, where the size of the subset is subject to a dynamically adjusted limit.
In case of string keys, we limit the number of characters hashed.
Hashing proceeds inwards alternating between taking a character from the beginning and the end of the string.
The algorithm (FNV-1A \citep{enwiki:1215467221}) is initialized with the length of the string to cheaply introduce some information about the truncated away characters into the hash.
See \Cref{alg:string-hash} for the code listing.

As a heuristic to detect overly severe truncation, we track the maximum chain length.
That is, when a new key is being inserted whose hash is computed with truncation (e.g. it's a string longer than the current limit), we check the number of keys already in its bucket.
If the probability of that many keys having collided with the uniform hash (without truncation) is less than 1\%, we double the limit and rehash.
Since hashes of strings are expensive to compute, they are cached (see \Cref{alg:puthash,sec:the-case-for-adaptive-hashing,sec:sbcl-hash-tables}).
By compromising the hash function's quality, we run the risk of having to perform more comparisons, which can be costly, thus, it is important to have a tight limit on the chain length, which we achieve by precomputing them for all possible power-of-2 bucket counts at load factor 1 and changing the current limit when the hash table is resized.

\newcommand\figstringexistingregret{
\begin{tikzpicture}
  \begin{axis}[xlabel=number of keys, ylabel=regret,
      xmin=1, xmode=log, log basis x={2},
      ymin=-0.01,
      legend pos=south east,
      legend style={nodes={scale=0.7, transform shape}},
      legend cell align={left},
      height=0.6*\linewidth,
      width=0.98\linewidth,
    ]
    \pgfplotstableread{data/string-existing-sbcl.tbl}{\sorted}
    \addplot[uniformhash] table [x=nkeys, y=rndregret] {\sorted};
    \addlegendentry{Uniform};

    \pgfplotstableread{data/string-existing-sbcl.tbl}{\sorted}
    \addplot[prefuzzhash] table [x=nkeys, y=regret] {\sorted};
    \addlegendentry{SBCL};

    \pgfplotstableread{data/string-existing-adaptive.tbl}{\sorted}
    \addplot[adaptivehash] table [x=nkeys, y=regret] {\sorted};
    \addlegendentry{\textbf{Adaptive}};
  \end{axis}
\end{tikzpicture}
\caption{Regret (\Cref{def:regret}) with string keys.
Adaptive does not gain or significantly compromise on regret.
Points where the truncation limit changes vary between runs.}
}

\newcommand\figstringexistingput{
\begin{tikzpicture}
  \begin{axis}[ylabel=ns / put,
      xmin=1, xmode=log, log basis x={2},
      ymin=50, ymax=200, ymode=log, log basis y={2},
      legend pos=north west,
      legend style={nodes={scale=0.7, transform shape}},
      legend cell align={left},
      height=0.6*\linewidth,
      width=0.98\linewidth,
    ]
    \pgfplotstableread{data/string-existing-sbcl.tbl}{\sorted}
    \addplot[prefuzzhash] table [x=nkeys, y=putns] {\sorted};
    \addlegendentry{SBCL};

    \pgfplotstableread{data/string-existing-adaptive.tbl}{\sorted}
    \addplot[adaptivehash] table [x=nkeys, y=putns] {\sorted};
    \addlegendentry{\textbf{Adaptive}};
  \end{axis}
\end{tikzpicture}
\caption{PUT timings in nanoseconds with string keys.
Note the log scales. The plot shows the \emph{average} time for inserting a new key when populating an empty hash table with a given number of keys.}
}

\newcommand\figstringexistingget{
\begin{tikzpicture}
  \begin{semilogxaxis}[ylabel=ns / get,
      xmin=1, xmode=log, log basis x={2},
      ymin=50, ymax=200, ymode=log, log basis y={2},
      legend pos=north west,
      legend style={nodes={scale=0.7, transform shape}},
      legend cell align={left},
      height=0.6*\linewidth,
      width=0.98\linewidth,
    ]
    \pgfplotstableread{data/string-existing-sbcl.tbl}{\sorted}
    \addplot[prefuzzhash] table [x=nkeys, y=getns] {\sorted};
    \addlegendentry{SBCL};

    \pgfplotstableread{data/string-existing-adaptive.tbl}{\sorted}
    \addplot[adaptivehash] table [x=nkeys, y=getns] {\sorted};
    \addlegendentry{\textbf{Adaptive}};
  \end{semilogxaxis}
\end{tikzpicture}
\caption{GET timings with string keys.}
}

\begin{figure}[t]
\figstringexistingregret
\label{fig:string-existing-regret}
\end{figure}

If there are significantly more collisions after this rehash than would be expected with the uniform hash, then we double the limit again and rehash.
This procedure repeats until the number of collisions falls near the expected level (see \Cref{sec:number-of-collisions} for the details of our approximation) or there are no more keys with truncated hashes\footnote{We use the highest bit in the hash to indicate truncation.}.

\subsection{Experiments with String Keys}
\label{sec:experiments-with-string-keys}

We implemented adaptive hashing in SBCL's standard \lisp{equal} hash tables\footnote{Common Lisp's \lisp{equal} is like Java's \texttt{.equals()}: it compares two objects by value.}.
Then, we collected all different strings present in the running Lisp, which gave us about \num{40000} keys and measured the time it takes to populate hash tables from an empty state, averaging over same-sized random subsets of the keys.
We partitioned the range of possible key counts into maximally large segments within which the hash table internal data structures are not resized and measured performance with the lowest and highest possible key count in each segment.
We also measured the time it takes to look up an existing key (GET), to look up a key not in the hash table (MISS), and to delete an existing key (DEL).
All reported times are in nanoseconds per operation (e.g. insertions for populating the table, lookups for GET).
For details of the experimental setup, see \Cref{sec:microbenchmarking-methodology}.

\begin{figure}[t]
\figstringexistingput
\label{fig:string-existing-put}
\end{figure}

\begin{figure}[t]
\figstringexistingget
\label{fig:string-existing-get}
\end{figure}

\Cref{fig:string-existing-regret,fig:string-existing-put,fig:string-existing-get} show our results.
The jaggedness of the lines is the effect of hash table resizing.
At smaller sizes, the gains are considerable and similar to those in \citet{hentschel2022entropy}.
With the current implementation, the performance of the adaptive method eventually falls back to the baseline (unmodified SBCL) because the max-chain-length check in \Cref{alg:puthash} gets triggered by strings that share a long common prefix and suffix, pushing the truncation limit beyond the length of most keys.

Note that a more advanced implementation could reduce the overhead of rehashing by starting \Cref{alg:string-hash} from the hash value produced with a lower $limit$ and only hashing the characters beyond that.
While this would help PUT results a bit, GET would not benefit.
The more fundamental problem is that max-chain-length is a really loose indicator of the cost, and we should track the regret instead.

\subsection{Experiments with List Keys}
\label{sec:experiments-with-list-keys}

With a small modification, the solution used in the string case also works for lists, which is another common type of key in \lisp{equal} hash tables.
Since lists are not random-access, we only consider their prefixes (unlike \Cref{alg:string-hash}, which also includes suffixes of strings).
At least since the year 2000, stock SBCL has truncated list keys to length 4 to avoid stack exhaustion in case its recursive hashing algorithm is invoked on a circular key.
This value might have been chosen empirically to maximize performance, or user code has adapted to this limitation even if the root cause of the issue remained unrecognized, or both.
Regardless, we found that a default limit of 4 worked best.
From this default, the limit is increased as collisions warrant, similar to the string case.
So, the practical gains for the adaptive method may be limited to cases where crucial bits in keys are put unknowingly beyond length 4.
Curiously, there are two such cases in SBCL's own test suite (\texttt{\small arith-combinations.pure.lisp} and \texttt{\small save4.test.sh}), which experience a disastrous number of collisions and are sped up by 60\% when the adaptation mechanism increases the truncation length.

\section{Integer and Pointer Keys}
\label{sec:pointer-keys}

\Cref{sec:the-case-for-adaptive-hashing} indicates that it may be possible to improve the regret by adapting the hash function to the keys on the fly, but whether there is a practical implementation of adaptation -- which covers a non-trivial set of workloads and is lightweight enough to benefit overall performance -- remains to be demonstrated.

In the previous section, we showed an example of how to reduce the complexity of the hash without increasing the number of collisions too much.
In this section, we instantiate the general idea of adaptive hashing on the problem of hashing integer and pointer keys \citep{integerhashing}.
In this case, we will consider reducing the number of collisions at the same time as speeding up the hash function, which requires leveraging the underlying key distribution.

\subsection{Perfect Hashing on Arithmetic Sequences}
\label{sec:arithmetic-sequences}

First, we consider the idealized case of adding keys to a power-of-2 hash table from an arithmetic sequence of integers in order.
Let $a_i = a_0 + id$ for all $i \in [1, \dots]$, where $a_0$ is the offset and $d$ is difference between successive elements.
A perfect hash here is $h_i = \floor{a_i/d} = h_0 + i$, but this requires division by an arbitrary constant $d$, which is slow on current hardware.

Since the number of buckets $m$ is a power of 2, as long as $d$ is odd, any finite progression in $a$ will be perfectly distributed modulo $m$ because $d$ and $m$ are coprime.
Let $s$ be the largest integer such that $2^s \mid d$.
Then $h_i = \floor{a_i / 2^s}$ is an arithmetic sequence with odd increment $d / 2^s$, thus perfectly distributed.
So, if we know $s$, we can use the perfect hash function $k \rightarrow k \gg s$ with a single arithmetic shift.
Less regular than arithmetic sequences are the addresses of sequentially allocated objects, which we consider next.

\subsection{Page-Based Memory Allocators}
\label{sec:allocators}

If keys are memory addresses (e.g. pointers to objects), then we may be able to take advantage of how the allocator works.
We consider the case of page-based allocators, which first allocate contiguous memory ranges called pages from the OS.
From these pages, they are then able to allocate objects much more quickly.
To decrease contention, pages are often assigned to individual threads.
A thread may have multiple pages assigned to it, in which case it may choose between pages based on the allocation size.
In particular, TCMalloc allocates pages of 8KB by default and has allocations of roughly the same size within the same page.
SBCL, a Common Lisp implementation with a moving garbage collector (GC), has two 32KB pages per thread: one for conses (whose size is two machine words), and another for all other objects\footnote{This is a simplification. Some platforms also have immobile space, arenas, and ``large'' objects are handled specially.}.

Under such allocators, if the keys are of the same size and are allocated in a tight loop, we can expect their addresses to roughly follow an arithmetic progression.
But only roughly because with TCMalloc there may be holes (that belonged to previously freed objects) on the page, which may be filled in a more irregular pattern, and when a page is full, the new page may be anywhere in memory.
With SBCL, pages have no holes because allocation within a page is simply a pointer bump\footnote{The address of the next object is the address of the previously allocated object plus its size aligned to a double word boundary.} and because the GC compacts.
When there is not enough room left on the current page or GC happens, the allocator gets a new page, but the addresses of subsequent pages are much less regular than the addresses within pages.
A further complication with SBCL is that there are no separate pages for objects of different sizes, so if objects of non-constant sizes are allocated between subsequent keys, that throws regularity off and may reduce the density of addresses of keys within the page.

In summary, to the extent that addresses of keys are distributed following arithmetic progressions, their hashes can be improved.
We leverage the following properties:
\begin{itemize}
\item \textbf{denseness}: many keys are allocated on the same page,
\item \textbf{alignment}: the power-of-2 alignment of keys is constant (especially within a single page).
\end{itemize}

\subsection{Detecting Common Power-of-2 Factors}

We need to detect $s$ in the factor $2^s$ common to all keys fast and safely.
Fast because this will be done at runtime, and safely because using an overestimation of $s$ in e.g. the Arithmetic hash $k \rightarrow k \gg s$ can discard valuable bits and lead to a disastrous number of collisions, which must then be detected and corrected by another change to the hash function (see \Cref{alg:puthash} and \Cref{alg:rehash-eq}).

\Cref{alg:detect-shift} is designed to fulfil these requirements.
It is extremely light, performing about 2 bitwise assembly instructions per key, over only a subset of keys to limit memory access.
Also, it detects the common factor without assuming that the sequence is arithmetic, which makes it applicable in more circumstances.

\begin{algorithm}[t]
  \caption{\small Detecting common low bits in integer keys (e.g. pointers from page-based allocators) $k_1, \dots k_n$.
This is to find the largest power-of-2 factor of the common difference in an arithmetic progression regardless of the offset caused by the first term.
The symbols $\lor$, $\oplus$, $\lnot$ denote the bitwise OR, XOR and NOT operations.
Note that $\mathit{count\_leading\_zero\_bits}$ is often a single assembly instruction such as LZCNT on x86.}
  \label{alg:detect-shift}
  \hrule
  \begin{algorithmic}[1]
    \Function{$\operatorname{count\_common\_prefix\_bits}$}{$k_1, \dots, k_n$}
      \Let{$\mathit{mask}$}{$0$}
      \Comment{Changed bits detected so far.}
      \For{$i \gets 2$ to $n$}
        \State{$\mathit{mask} \gets \mathit{mask} \lor (k_1 \oplus k_i)$}
      \EndFor
      \State\Return{$\operatorname{count\_leading\_zero\_bits}(\lnot \mathit{mask})$}
    \EndFunction
  \end{algorithmic}
  \hrule
\end{algorithm}

As to safety, if we have $n$ keys, then the probability of a bit appearing constant by chance is $2^{1-n}$ (assuming that bits are $\text{Binomial}(0.5)$ in the hash values).
Thus, in practice, we can detect constant low bits with high probability with as few as 8-16 keys.
We use the detected shift $s$ in the following three hash functions.

\subsection{The Arithmetic Hash}
\label{sec:the-arithmetic-hash}

The Arithmetic hash function is $k \rightarrow k \gg s$, where $s \in \natnumzero$.
As discussed in \Cref{sec:arithmetic-sequences}, this is a perfect hash function for arithmetic progressions.

\subsection{The Pointer-Mix Hash}
\label{sec:the-pointer-mix-hash}

The Arithmetic hash function can easily have high cost if, for example, pointers on multiple pages come from the same smallish subset on each page.

Our next hash function, Pointer-Mix, combines the Arithmetic hash $k \gg s$ with a general purpose hash of the page address $k \gg \mathrm{PB}$, where $\mathrm{PB}$ is the base 2 logarithm of the allocation page size in bytes.
The Pointer-Mix hash function is $k \rightarrow k \gg s \oplus \operatorname{safe}(k \gg \mathrm{PB})$, where $\oplus$ is the bitwise XOR operation, and $\operatorname{safe}()$ is a general purpose hash function such as Murmur3.

Next, we characterize its regret in the setting discussed in \Cref{sec:allocators}, when keys are allocated in a tight loop but do not fit on a single page.
The keys are pointers to objects distributed uniformly between multiple pages and within pages form subsets of values of arithmetic sequences of the same increment.

\begin{restatable}{proposition}{propexpectedcostofpointermix}[Expected Cost of Pointer-Mix]
Let $k_{1:n}$ be integer keys, and $\mathcal{P}=\{k_i \gg \mathrm{PB}\ \colon i \in [1,n]\}$ the set of pages (the high bits of keys).
Let the keys be distributed over the pages uniformly, $n=|\mathcal{P}|u$, where $u$ is the number of keys on the same page ($u=|\{i\colon k_i \gg \mathrm{PB} = p\}|$ for all pages $p \in \mathcal{P}$).
We assume that all $u$ keys on the same page form random subsets of arithmetic progressions with page specific offsets but the same increments.
Then, the expected cost of the Pointer-Mix hash function is
\begin{align*}
1+\frac{n - u \min\pigl(1,\frac{2^{\mathrm{PB} - s}}{m}\pigr)}{2m}.
\end{align*}
\end{restatable}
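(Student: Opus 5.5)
The plan is to write the cost as a sum over pairs of keys and compute the collision probability pair by pair, as in \Cref{prop:expected-cost-of-uh}. Since $\sum_b c_b(c_b+1)/2 = n + \sum_{i<j} [h_i \equiv h_j \pmod m]$, we have
\begin{align*}
C = 1 + \frac{1}{n}\sum_{i<j} \mathbf{1}[h_i \bmod m = h_j \bmod m].
\end{align*}
Taking expectations, it suffices to compute the collision probability for a pair of keys on \emph{different} pages and for a pair on the \emph{same} page. The randomness comes from $\operatorname{safe}()$, modelled as a uniform hash of the page index, and from the random choice of the subsets of the arithmetic progressions.

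For keys on different pages $p \neq p'$, the terms $\operatorname{safe}(p)$ and $\operatorname{safe}(p')$ are independent and uniform. Their low $\log_2 m$ bits are therefore uniform and independent, so after XOR with any fixed $k \gg s$ the two hashes collide modulo $m$ with probability exactly $1/m$. The number of such pairs is $\binom{n}{2} - |\mathcal{P}|\binom{u}{2}$.

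For keys on the same page, the term $\operatorname{safe}(p)$ is common and cancels in the comparison. The keys therefore collide iff $(k_i \gg s) \equiv (k_j \gg s) \pmod m$. Within a page the shifted keys $k \gg s$ are distinct members of an arithmetic progression with odd increment; I will take the increment to be exactly $2^s$ after shifting, as the intended reading. They range over $2^{\mathrm{PB}-s}$ consecutive residues.
\begin{itemize}
\item If $2^{\mathrm{PB}-s} \le m$, the residues of distinct keys on a page are distinct, so there are no within-page collisions.
\item If $2^{\mathrm{PB}-s} > m$, the $2^{\mathrm{PB}-s}$ slots fill each residue class equally, $2^{\mathrm{PB}-s}/m$ times. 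For two distinct uniformly random slots of a random subset, the collision probability is $(2^{\mathrm{PB}-s}/m - 1)/(2^{\mathrm{PB}-s}-1)$.
\end{itemize}

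This is the main obstacle: the stated formula should come out as
\begin{align*}
\frac{1}{m}\Bigl(1 - \min\bigl(1, 2^{\mathrm{PB}-s}/m\bigr)\cdots\Bigr),
\end{align*}
and the exact ``without replacement'' factor does not simplify cleanly. I expect to need an approximation or idealization. Matching the target, the total expected collisions should be
\begin{align*}
\frac{1}{m}\Bigl[\binom{n}{2} - |\mathcal{P}|\binom{u}{2}\Bigr] + |\mathcal{P}|\binom{u}{2}\,\rho,
\end{align*}
where $\rho$ is the within-page rate. I then divide by $n$ and compare with $\frac{n - u\min(1, 2^{\mathrm{PB}-s}/m)}{2m}$.

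I would try first to choose the idealization of ``random subset'' (for example, with-replacement sampling from the page slots, or treating $u-1\approx u$ as the paper seems to) that makes the bookkeeping give exactly the $u\min(1, 2^{\mathrm{PB}-s}/m)$ reduction. The interpretation is that same-page keys are spread perfectly over $\min(m, 2^{\mathrm{PB}-s})$ residues, which removes that fraction of the uniform hash's collisions. I would then check the two endpoint cases, $u=1$ and a single page, against \Cref{prop:expected-cost-of-uh} and the perfect-hash cost respectively, to pin down the constants.
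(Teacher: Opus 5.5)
Your decomposition $C=1+\frac{1}{n}\sum_{i<j}\mathbf{1}[h_i\equiv h_j \pmod m]$ is sound. It is the pairwise form of the paper's argument, which inserts the pages one after another and charges each key of the $p$-th page $1+(p-1)\frac{u}{m}$. Both of your bullets are also correct.

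The gap is the final step. You intend to choose an idealization of ``random subset'' that reproduces the reduction $u\min\bigl(1,2^{\mathrm{PB}-s}/m\bigr)$. No such choice exists, because your own bullets contradict the formula as printed. Write $N=2^{\mathrm{PB}-s}$.

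If $N\le m$, same-page keys never collide, whatever the subset. Your count then gives exactly $\frac{1}{m}\bigl[\binom{n}{2}-|\mathcal{P}|\binom{u}{2}\bigr]=\frac{n(n-u)}{2m}$ expected colliding pairs, i.e.\ $C=1+\frac{n-u}{2m}$. The printed formula instead gives $1+\frac{n-uN/m}{2m}$, which is strictly larger when $N<m$. Your planned check at $u=1$ already fails there: with no same-page pairs the cost must be the uniform $1+\frac{n-1}{2m}$.

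If $N>m$, the printed $\min$ equals $1$, so the formula asserts there are no same-page collisions at all. Your second bullet gives the positive rate $\frac{N/m-1}{N-1}$ for $u\geq 2$.

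So the fraction inside the $\min$ is upside down. The statement should read $1+\frac{n-u\min(1,\,m/2^{\mathrm{PB}-s})}{2m}$, and the rate you guessed in your obstacle paragraph has the same inversion. The paper's own proof supports this corrected form:
\begin{itemize}
\item Its first case is exactly your first bullet: $\log_2 m\geq\mathrm{PB}-s$ gives $1+\frac{n-u}{2m}$.
\item Its second case says that ``only'' a fraction of a page's $u$ keys are guaranteed not to collide. That number is fewer than $u$ only if it is $um/2^{\mathrm{PB}-s}$; the proof's $um^{-1}2^{\mathrm{PB}-s}$ has the same slip.
\end{itemize}

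Even with the corrected ratio, the formula is not exact for $N>m$ under your model, and the paper does not prove that it is: its second case is a heuristic rescaling of $u$. Carrying your exact within-page rate through gives
\begin{align*}
C=1+\frac{n-1-(u-1)\frac{m-1}{N-1}}{2m}.
\end{align*}
This coincides with $1+\frac{n-um/N}{2m}$ at $u=N$. Otherwise it lies below it by $\frac{1}{2m}\bigl(1+\frac{(u-1)(m-1)}{N-1}-\frac{um}{N}\bigr)\in\bigl[0,\frac{1}{2m}\bigl(1-\frac{m}{N}\bigr)\bigr]$. The corrected formula is exactly what you get by replacing $u-1$ with $u$ and $N-1$ with $N$ in the within-page term.

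The with-replacement idealization you mention goes the wrong way. For $N\geq m$ it makes same-page pairs collide with probability $1/m$, which collapses the cost to the uniform hash's $1+\frac{n-1}{2m}$.

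So do not look for a model that turns the printed formula into an identity. Instead:
\begin{itemize}
\item prove the exact result for $N\le m$;
\item derive the exact expression above for $N>m$;
\item present the corrected $\min$ formula as an upper bound within $\frac{1}{2m}$ of it in that regime.
\end{itemize}
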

\noindent See \Cref{sec:proofs} for the proof.
This cost is upper-bounded by that of the uniform hash (\Cref{prop:expected-cost-of-uh}), $1+\frac{n-1}{2m}$, and we can see that with a few densely packed pages, we can get reasonable improvements, which diminish quickly with more pages and sparsity.
Meanwhile, at the single-page extreme ($u=n\leqslant m$), Pointer-Mix is a perfect hash.

\subsection{The Pointer-Shift Hash}
\label{sec:the-pointer-shift-hash}

We have seen that as the number of pages grows, Pointer-Mix quickly falls back to the performance level of the uniform hash.
It is also slow: it includes a general purpose hash.

In practice, we found that the Pointer-Shift hash $k \rightarrow k \gg s' + k \gg \mathrm{PB}$ often outperforms Pointer-Mix.
Furthermore, it also behaves rather similarly to the Arithmetic hash if $s'$ and $PB$ are not close in value.
To avoid degenerating to $k \rightarrow 2 k \gg \mathrm{PB}$ at $s=\mathrm{PB}$, $s'$ is set to a large value in this case to zero out the first term without introducing a slow conditional.

\subsection{The Constant Hash}

The hash functions discussed up to now are adaptive, as they all involve the shift $s$ detected from the keys.
A different kind of adaptation, based on the number of keys but ignoring their values is also possible.
We mirror the common practice of starting with an array plus linear search and switching to a hash table above a predefined, small number of keys but hide it behind the hash table API and utilize it when the comparison function is extremely light as is the case with integer / pointer hashing.
This may also be viewed as a \emph{Constant} hash with a specialized implementation or a hash table with only one bucket.

\subsection{Other Hashes}

Stock SBCL comes with the \emph{Prefuzz} hash function, which was designed by hand and performs well empirically in many situations.
Naturally, Prefuzz takes advantage of the memory allocation patterns to make common use-cases fast (e.g. \lisp{symbol} keys, frequently used in the compiler), but it does so at the expense of extreme penalties to others.

The \emph{Murmur3} mixer function is a fast, widely used, general-purpose, non-cryptographic hash function with strong mixing properties.
Its bucket distribution is very close to that of the uniform hash.

\begin{algorithm}[t]
  \caption{\small Adapting the hash function in \lisp{eq} (i.e. object identity based) hash tables at rehash.
Note that we count collisions with Prefuzz only at larger sizes; otherwise it adapts only through the max-chain-length mechanism (see \Cref{alg:puthash}) to reduce the overhead.
We refer to this algorithm as Co+PS>Pr>Mu when comparing minor variations.}
  \label{alg:rehash-eq}
  \hrule
  \begin{algorithmic}[1]
    \Require{Integer/pointer keys $k_i (i \in [1, \dots, n])$, doubled number of buckets $m$, current hash function $h$.}
    \Procedure{$\operatorname{rehash\_and\_maybe\_adapt\_eq}$}{}
      \If{h = constant\_hash}
        \If{m = 64}
          \Let{$s$}{$\operatorname{count\_common\_prefix\_bits}(k_{1:16}$)}
          \Let{$h$}{$\operatorname{pointer\_shift}$}
        \EndIf
      \EndIf
      \If{$h = \operatorname{pointer\_shift}$}
        \Let{$\mathit{n\_collisions}$}{$\operatorname{rehash}(m, h, \mathit{count=True})$}
        \If{$\mathit{n\_collisions}$ is too many}
          \Let{$h$}{$\operatorname{prefuzz}$}
        \EndIf
      \EndIf
      \If{$h = \operatorname{prefuzz}$}
        \If{$m < 2048$}
          \State{$\operatorname{rehash}(m, h)$}
        \Else
          \Let{$\mathit{n\_collisions}$}{$\operatorname{rehash}(m, h, \mathit{count=True})$}
          \If{$\mathit{n\_collisions}$ is too many}
            \Let{$h$}{$\operatorname{murmur3}$}
          \EndIf
        \EndIf
      \EndIf
      \If{$h = \operatorname{murmur3}$}
        \State $\operatorname{rehash}(m, h)$
      \EndIf
    \EndProcedure
  \end{algorithmic}
  \hrule
\end{algorithm}

\subsection{Adapting the Hash Function}
\label{sec:choosing-the-hash-function}

We implemented the above hash functions in SBCL and modified its hash table implementation to perform adaptation with pointers and integers.
So, we use Common Lisp's \lisp{eq} hash tables, whose comparison function is based on object identity (i.e. addresses of non-immediate objects or the values of immediate objects such as integers that fit into a machine word) similarly to the \texttt{==} operator in Java.

The adaptation mechanism for \lisp{eq} hash tables (\Cref{alg:rehash-eq}) fleshes out \Cref{alg:puthash} and works as follows.
Hash tables are initialized with the Constant hash.
When the number of keys exceeds 32, $\mathit{count\_common\_prefix\_bits}$ determines the shift $s$ to use, and we switch to the Pointer-Shift hash function.
Whenever rehash finds that there are significantly more collisions than would be expected with a uniform hash (see \Cref{sec:number-of-collisions}), the hash table switches to Prefuzz.

\newcommand\figfixnumprogoneregret{
\tikzpicturedependsonfile{data/fixnum-prog-1-murmur}
\tikzpicturedependsonfile{data/fixnum-prog-1-prefuzz}
\tikzpicturedependsonfile{data/fixnum-prog-1-flat}
\tikzpicturedependsonfile{data/fixnum-prog-1-flat-safe-small}
\begin{tikzpicture}
  \begin{axis}[xlabel=number of keys, ylabel=regret,
      xmin=1, xmode=log, log basis x={2},
      ymin=-0.01, ymax=0.79,
      legend pos=north west,
      legend columns=2,
      legend style={nodes={scale=0.7, transform shape}},
      legend cell align={left},
      height=0.6*\linewidth,
      width=0.98\linewidth,
    ]
    \pgfplotstableread{data/fixnum-prog-1-murmur.tbl}{\sorted}
    \addplot[uniformhash] table [x=nkeys, y=rndregret] {\sorted};
    \addlegendentry{Uniform};

    \pgfplotstableread{data/fixnum-prog-1-murmur.tbl}{\sorted}
    \addplot[murmurhash] table [x=nkeys, y=regret] {\sorted};
    \addlegendentry{Murmur};

    \pgfplotstableread{data/fixnum-prog-1-prefuzz.tbl}{\sorted}
    \addplot[prefuzzhash] table [x=nkeys, y=regret] {\sorted};
    \addlegendentry{Prefuzz};

    \pgfplotstableread{data/fixnum-prog-1-flat.tbl}{\sorted}
    \addplot[constanthash] table [x=nkeys, y=regret] {\sorted};
    \addlegendentry{Co+Pr};

    \pgfplotstableread{data/fixnum-prog-1-flat-safe-small.tbl}{\sorted}
    \addplot[adaptivehash] table [x=nkeys, y=regret] {\sorted};
    \addlegendentry{\textbf{Adaptive}};
    \addplot[mark=*,mark options={scale=1.0}] coordinates {(33,0)};
  \end{axis}
\end{tikzpicture}
\caption{Regret with \texttt{FIXNUM} \texttt{:PROG 1}.
Murmur closely tracks Uniform.
Prefuzz is aggressively optimized for small sizes.
Adaptive (\Cref{alg:rehash-eq}) is a perfect hash here.
Both Co+Pr (Constant followed by Prefuzz) and Adaptive use the Constant hash until the fixed switch point at 32 keys (black dot).}
}

\newcommand\figfixnumprogoneput{
\tikzpicturedependsonfile{data/fixnum-prog-1-murmur}
\tikzpicturedependsonfile{data/fixnum-prog-1-prefuzz}
\tikzpicturedependsonfile{data/fixnum-prog-1-flat}
\tikzpicturedependsonfile{data/fixnum-prog-1-flat-safe-small}
\begin{tikzpicture}
  \begin{axis}[ylabel=ns / put,
      xmin=1, xmode=log, log basis x={2},
      ymin=20, ymax=180, ymode=log, log basis y={2},
      legend pos=north west,
      legend style={nodes={scale=0.7, transform shape}},
      legend cell align={left},
      height=0.6*\linewidth,
      width=0.98\linewidth,
    ]
    \pgfplotstableread{data/fixnum-prog-1-murmur.tbl}{\sorted}
    \addplot[murmurhash] table [x=nkeys, y=putns] {\sorted};
    \addlegendentry{Murmur};

    \pgfplotstableread{data/fixnum-prog-1-prefuzz.tbl}{\sorted}
    \addplot[prefuzzhash] table [x=nkeys, y=putns] {\sorted};
    \addlegendentry{Prefuzz};

    \pgfplotstableread{data/fixnum-prog-1-flat.tbl}{\sorted}
    \addplot[constanthash] table [x=nkeys, y=putns] {\sorted};
    \addlegendentry{Co+Pr};

    \pgfplotstableread{data/fixnum-prog-1-flat-safe-small.tbl}{\sorted}
    \addplot[adaptivehash] table [x=nkeys, y=putns] {\sorted};
    \addlegendentry{\textbf{Adaptive}};
  \end{axis}
\end{tikzpicture}
\caption{PUT timings with \texttt{FIXNUM} \texttt{:PROG 1}.
Prefuzz outperforms Murmur even at large sizes despite higher regret because it's friendlier to the cache (its collisions are between subsequent elements of the progression), and its combination with Constant is even faster.
Thus, despite being a perfect hash, Adaptive can improve on them only marginally.}
}

\newcommand\figfixnumprogoneget{
\begin{tikzpicture}
  \begin{axis}[ylabel=ns / get,
      xmin=1, xmode=log, log basis x={2},
      ymin=6, ymax=180, ymode=log, log basis y={2},
      legend pos=north west,
      legend style={nodes={scale=0.7, transform shape}},
      legend cell align={left},
      height=0.6*\linewidth,
      width=0.98\linewidth,
    ]
    \tikzpicturedependsonfile{data/fixnum-prog-1-murmur}
    \tikzpicturedependsonfile{data/fixnum-prog-1-prefuzz}
    \tikzpicturedependsonfile{data/fixnum-prog-1-flat}
    \tikzpicturedependsonfile{data/fixnum-prog-1-flat-safe-small}

    \pgfplotstableread{data/fixnum-prog-1-murmur.tbl}{\sorted}
    \addplot[murmurhash] table [x=nkeys, y=getns] {\sorted};
    \addlegendentry{Murmur};

    \pgfplotstableread{data/fixnum-prog-1-prefuzz.tbl}{\sorted}
    \addplot[prefuzzhash] table [x=nkeys, y=getns] {\sorted};
    \addlegendentry{Prefuzz};

    \pgfplotstableread{data/fixnum-prog-1-flat.tbl}{\sorted}
    \addplot[constanthash] table [x=nkeys, y=getns] {\sorted};
    \addlegendentry{Co+Pr};

    \pgfplotstableread{data/fixnum-prog-1-flat-safe-small.tbl}{\sorted}
    \addplot[adaptivehash] table [x=nkeys, y=getns] {\sorted};
    \addlegendentry{\textbf{Adaptive}};
  \end{axis}
\end{tikzpicture}
\caption{GET timings with \texttt{FIXNUM} \texttt{:PROG 1}.
Keys are queried in random order so regret matters more here than with PUT, but the cache-friendliness of Prefuzz still keeps it ahead of Murmur.
As expected, Adaptive can finally benefit from its zero regret after the Constant hash phase.}
}

\begin{figure}[t]
\figfixnumprogoneregret
\label{fig:fixnum-prog-1-regret}
\end{figure}

\begin{figure}[t]
\figfixnumprogoneput
\label{fig:fixnum-prog-1-put}
\end{figure}

We also need to fall back on Murmur3 in case Prefuzz produces too many collisions.
However, because Prefuzz is somewhat robust and considerably faster than Murmur3, we count collisions only at larger sizes to reduce the adaptation overhead\footnote{The overhead is that of incrementing a single counter if the bucket in \lisp{index-vector} is not zero, which is a hard to predict branch for the CPU.} and otherwise rely on the max-chain-length mechanism from \Cref{alg:puthash} to fall back on the next safer hash function (in the order they appear in $rehash\_and\_maybe\_adapt\_eq$ in \Cref{alg:rehash-eq}) if the key being inserted falls into a bucket with at least 14 other keys\footnote{With a uniform hash, the probability of the maximum chain length being at most 14 is higher than 99\% for all possible hash table sizes.}.
Using the tighter limit from \Cref{sec:half-an-example} would be too costly for \lisp{eq} hashing.
See \Cref{sec:implementation-details} for the more details.

\begin{figure}[t]
\figfixnumprogoneget
\label{fig:fixnum-prog-1-get}
\end{figure}

In summary, we have two ways of detecting when the current hash function is likely suboptimal: tracking collisions at rehash and chain length at insertion.
We use collision tracking to catch gradual degradations of performance, and max-chain-length to catch catastrophic failures in a single bucket.
One can construct lower and upper bounds on the average cost of lookup based on the collision count and max-chain-length.
Neither of these mechanisms are perfect, but they work quite well in tandem to inform the adaptation logic about the cost of lookup.

Both the collision count and max chain length are proxies for the cost of lookup as defined in \Cref{def:cost-of-hashes}, which in turn is a proxy for performance that ignores important factors such as cache effects.
Thus, even if adaptation can be driven by proxy statistics, there is no way around actually measuring performance directly.

\subsection{Microbenchmarks}
\label{sec:microbenchmarks}

We conducted experiments on SBCL's \lisp{eq} hash tables with various hash functions, hash table sizes, integer and pointer keys.
Our experimental methodology is the same as in \Cref{sec:experiments-with-string-keys} except for how keys are generated, which we briefly describe below (see \Cref{sec:microbenchmarking-methodology} for the details).
In the experiments, Co+Pr starts with the Constant hash and switches to Prefuzz above 32 keys.
Adaptive behaves as described in \Cref{alg:rehash-eq}.

\begin{itemize}
\item{(\texttt{FIXNUM} \texttt{:PROG 1})}
The first experiment is with \lisp{fixnum}s\footnote{A fixnum is a signed integer in Lisp that fits into a machine word. It's similar to an \texttt{int64} on x86-64.} following an arithmetic progression with increment 1 (denoted by \texttt{:PROG 1}).
As \Cref{fig:fixnum-prog-1-regret} shows, the regrets of Murmur and Uniform are very close, as expected.
Prefuzz seems to be aggressively optimized for small hash table sizes, and Adaptive is a perfect hash in this simple scenario.
However, differences in regret do not predict actual performance well, which is most visible in \Cref{fig:fixnum-prog-1-put}, where insertion is much faster with Prefuzz than with Murmur even where the latter has much smaller regret.
This is because the collisions with Prefuzz are between keys close in insertion order, which benefits the CPU's cache.
The same effect is present in lookups (\Cref{fig:fixnum-prog-1-get}), although to a lesser degree because the benchmark looks up keys in random order, so some random access to memory is inevitable.
See \Cref{sec:results-for-fixnum-prog-1} for the full set of results.

\item{(\texttt{FLOAT} \texttt{:PROG 1})}
Similar to the previous \lisp{fixnum} case, we also tested \lisp{single-float} keys.
As the regret curves in \Cref{fig:float-prog-1-regret} show, Prefuzz suffers a catastrophic failure, placing most keys in the same bucket, but Adaptive takes advantage of the many constant low bits in the keys, eventually falling back to Prefuzz and then to Murmur.
See \Cref{sec:results-for-float-prog-1} for the full set of results.

\item{(\texttt{FIXNUM} \texttt{:PROG 12})}
Next, we tested arithmetic progressions with increment 12.
This is intended to test whether the shift detection in \Cref{alg:detect-shift} works.
The regret curves in \Cref{fig:fixnum-prog-12-regret} and the operation timings tell a similar story as with \texttt{:PROG 1} except that Adaptive is no longer a perfect hash due to Pointer-Shift's $k \gg \mathrm{PB}$ term.
See \Cref{sec:results-for-fixnum-prog-12} for more.

\item{(\texttt{FIXNUM} \texttt{:RND 6})}
Like \texttt{:PROG 6}, but keys follow a random progression: 0--5 values are skipped randomly between subsequent keys.
This is intended to approximate populating a hash table with non-uniformly sized keys or values being allocated in a tight loop.
Results in \Cref{sec:results-for-fixnum-rnd-6} shows that Prefuzz is better than Murmur, and the large gains made by the Constant hash persist, but with less structure in the key sets, Prefuzz is harder to beat.

\item{(\texttt{CONS} \texttt{:RND 6})}
Similar to \texttt{FIXNUM} \texttt{:RND 6}, but we allocate real \lisp{cons} objects.
See \Cref{sec:results-for-cons-rnd-6} for the results.

\item{(\texttt{SYMBOL} \texttt{:EXISTING})}
Finally, we explore the case with the set of existing \lisp{symbol}s from Lisp as keys.
\Cref{sec:results-for-symbol-existing} shows that despite the scarcity of structure in the key distribution, Prefuzz maintains a small advantage over Murmur.
As expected, Co+Pr and Adaptive follow suit, most of their advantage being in the Constant hash phase.
\end{itemize}

\newcommand\figfloatprogoneregret{
\begin{tikzpicture}
  \begin{axis}[xlabel=number of keys, ylabel=regret,
      xmin=1, xmode=log, log basis x={2},
      ymin=-0.01, ymax=2000, ymode=log, log basis y={10},
      legend pos=south east,
      legend columns=2,
      legend style={nodes={scale=0.7, transform shape}},
      legend cell align={left},
      height=0.6*\linewidth,
      width=0.98\linewidth,
    ]
    \pgfplotstableread{data/float-prog-1-murmur.tbl}{\sorted}
    \addplot[uniformhash] table [x=nkeys, y=rndregret] {\sorted};
    \addlegendentry{Uniform};

    \pgfplotstableread{data/float-prog-1-murmur.tbl}{\sorted}
    \addplot[murmurhash] table [x=nkeys, y=regret] {\sorted};
    \addlegendentry{Murmur};

    \pgfplotstableread{data/float-prog-1-prefuzz.tbl}{\sorted}
    \addplot[prefuzzhash] table [x=nkeys, y expr={min(\thisrow{regret}, 2000)}] {\sorted};
    \addlegendentry{Prefuzz};

    \pgfplotstableread{data/float-prog-1-flat.tbl}{\sorted}
    \addplot[constanthash] table [x=nkeys, y expr={min(\thisrow{regret}, 2000)}] {\sorted};
    \addlegendentry{Co+Pr};

    \pgfplotstableread{data/float-prog-1-flat-safe-small.tbl}{\sorted}
    \addplot[adaptivehash] table [x=nkeys, y expr={min(\thisrow{regret}, 2000)}] {\sorted};
    \addlegendentry{\textbf{Adaptive}};
    \addplot[mark=*,mark options={scale=1.0}] coordinates {(33,0)};
  \end{axis}
\end{tikzpicture}
\caption{Regret with \texttt{FLOAT} \texttt{:PROG 1}.
To be able to plot the catastrophic failure of Prefuzz (and of Co+Pr, consequently), we use log scale for regret on this graph.
Single floats are especially problematic for Prefuzz because they can have many constant low bits.
Adaptive detects these constant low bits and does better than Uniform until variation in the floating point exponents makes its original estimate of the number of constant low bits invalid, and the resulting gradual increase in collisions makes it switch to Prefuzz at rehash time.
This is a spectacularly bad idea in this scenario, and the high number of collisions causes an immediate switch to Murmur.
The switch times vary by hash table because the key sets are generated starting from random offsets.}
}

\newcommand\figfixnumprogtwelveregret{
\begin{tikzpicture}
  \begin{axis}[xlabel=number of keys, ylabel=regret,
      xmin=1, xmode=log, log basis x={2},
      ymin=-0.01, ymax=0.79,
      legend pos=north west,
      legend columns=2,
      legend style={nodes={scale=0.7, transform shape}},
      legend cell align={left},
      height=0.6*\linewidth,
      width=0.98\linewidth,
    ]
    \pgfplotstableread{data/fixnum-prog-12-murmur.tbl}{\sorted}
    \addplot[uniformhash] table [x=nkeys, y=rndregret] {\sorted};
    \addlegendentry{Uniform};

    \pgfplotstableread{data/fixnum-prog-12-murmur.tbl}{\sorted}
    \addplot[murmurhash] table [x=nkeys, y=regret] {\sorted};
    \addlegendentry{Murmur};

    \pgfplotstableread{data/fixnum-prog-12-prefuzz.tbl}{\sorted}
    \addplot[prefuzzhash] table [x=nkeys, y=regret] {\sorted};
    \addlegendentry{Prefuzz};

    \pgfplotstableread{data/fixnum-prog-12-flat.tbl}{\sorted}
    \addplot[constanthash] table [x=nkeys, y=regret] {\sorted};
    \addlegendentry{Co+Pr};

    \pgfplotstableread{data/fixnum-prog-12-flat-safe-small.tbl}{\sorted}
    \addplot[adaptivehash] table [x=nkeys, y=regret] {\sorted};
    \addlegendentry{\textbf{Adaptive}};
    \addplot[mark=*,mark options={scale=1.0}] coordinates {(33,0)};
  \end{axis}
\end{tikzpicture}
\caption{Regret with \texttt{FIXNUM} \texttt{:PROG 12}.
Murmur closely tracks Uniform, but Prefuzz is better across almost the whole range.
Arithmetic (\Cref{sec:the-arithmetic-hash}) would be a perfect hash here, but Adaptive, which uses Pointer-Shift (\Cref{sec:the-pointer-shift-hash}), is not quite perfect due to the interference of its $k \gg \mathrm{PB}$ term.}
}

\begin{figure}[t]
\figfloatprogoneregret
\label{fig:float-prog-1-regret}
\end{figure}

\begin{figure}[t]
\figfixnumprogtwelveregret
\label{fig:fixnum-prog-12-regret}
\end{figure}

In all results presented, the Co+Pr and Adaptive hash functions, which both start out with the Constant hash, gain a lot of performance on insertion but lose on lookups.
This is still an overall win based just on the numbers presented here except in very lookup-heavy workloads.
However, an even larger unquantified benefit is in the reduced memory usage and garbage collection times due to the Constant hash having a specialized single vector implementation.

In summary, a general-purpose hash such as Murmur is a safe but suboptimal choice for many common situations in \lisp{eq} hash tables.
SBCL's own Prefuzz hash is hand-crafted for these common situations, where it is difficult to beat.
Still, because it is non-adaptive, it sacrifices performance in other cases and has terrible worst-case behaviour.
The adaptive approach combines the worst-case safety of Murmur with the common-case performance of Prefuzz, and it even manages to slightly outperform Prefuzz because having the reliable safety net of the fallback mechanism allows it to be more aggressive in catering to the common case.

\subsection{Macrobenchmarks}
\label{sec:macrobenchmarks}

Microbenchmarks are useful indicators of the highest achievable throughput, but their results do not necessarily carry over to more complex workloads, where factors such as code size and complexity gain importance.
To investigate this issue, we conducted experiments where hash table operations constitute only a small fraction of the workload.
In particular, we measured the times to 1.~compile and load a set of libraries; 2.~run the tests of the same set of libraries; 3.~run each test file in SBCL's \texttt{tests/} directory.
The detailed results are in \Cref{sec:macrobenchmark-results}; here, we only provide a summary.
\begin{enumerate}
\item The first suite is the heaviest on \lisp{eq} and \lisp{equal} hash table operations.
With SBCL's statistical profiler, we estimated that about 1.7\% of the total runtime was spent in small \lisp{eq} hash tables (that is, within the Constant hash's range of 0--32 keys) and 1.3\% in larger ones, while operations on \lisp{equal} hash tables took 1\%.
The relative speedup was 8\% in the large \lisp{eq} case and 50\% for \lisp{equal}.
The gain in the small \lisp{eq} case is harder to pin down because the Constant hash's significantly reduced garbage collection cost; we estimate it to lie in the 8\%--14\% range.

\item In the second suite, small/large \lisp{eq} and \lisp{equal} hash table operations constituted 0.55\%, 0.25\% and 0.3\% of the baseline result.
Relative gains were as previously except for the large \lisp{eq} case, possibly because our benchmarking methodology is not able to detect differences so small, or because the increased code size is not worth it in code paths so cold.

\item Finally, timing SBCL tests on a per-file basis revealed no major performance regressions.
Overall, we observed a 0.7\% gain due to adaptive \lisp{eq} hashing, with adaptive \lisp{equal} hashing contributing another 1.5\%, which came almost exclusively from the two tests with longer list keys (see \Cref{sec:experiments-with-list-keys}).
\end{enumerate}

In summary, by surviving the difficult transition from hot-path microbenchmarking to the cooler workloads reported in this section, adaptive hashing emerges as a method of practical relevance.


\section{Related Works}

Our method takes inspiration from Perfect Hashing, which selects a hash function for a given set of keys (known as static hashing).
Dynamic Perfect Hashing \citep{dietzfelbinger1994dynamic,belazzougui2009hash} allows the set of keys to change but still guarantees worst-case constant lookup time.
However, it requires more memory than plain hash tables, so it is not a drop-in replacement for them.
Cuckoo hashing also has worst-case constant lookup time but with a lower memory footprint.
In a sequential implementation, it requires 1.5 hash function evaluations and memory accesses on average per lookup, which is about what the uniform hash has at load factor 1 (see \Cref{prop:expected-cost-of-uh}).

Although Rabbit Hashing \citep{rabbithashing} and the hash table by \citet{iwrotehtefastesthashtable} utilize maximum probe length to determine when to resize or reseed, unlike our adaptive method, they stop short of modifying the functional form of the hash, nor do they tailor it to fit a specific set of keys.

VIP hashing \citep{kakaraparthy2022vip} moves frequently accessed keys earlier in the collision chains to reduce the average lookup cost, adapting the storage layout to the access pattern.
As with our approach, this adaptation is performed online and requires careful management to minimize runtime overhead.
However, in contrast to our work, VIP hashing leaves the hash function itself constant.

\Citet{hentschel2022entropy} propose to learn the hash function offline from samples of the key distribution by using the most informative parts of compound keys.
Our case study on string keys in \Cref{sec:half-an-example} can be seen as an online version of their method, with the expensive learning phase removed.

In the taxonomy of \citet{chi2017hashing}, adaptive hashing falls under data-dependent hashing although they assume that training is performed offline.
The adaptive hashing method can also be viewed as a more robust, key-aware version of user-defined hash functions, which are also adapted offline to a particular key distribution.

There is a history of handcrafted hash functions that perform well in common cases, but exhibit spectacular failures in others.
As we have seen, Prefuzz in SBCL is one such example.
Java used to hash only about 1/8 of the characters in long strings \citep{jdkstringhashsubsampling}.
This hardcoded limit made hashing faster, but as it could lead to lots of collisions with no fallback mechanism to save it, from JDK 1.2 on, all characters are hashed.


\section{Conclusion}

We have laid out the case for adaptive hash functions, which reside between key-agnostic and perfect hashes.
Our primary contribution lies in reconceptualizing hash tables as inherently adaptive data structures, which can marry the theoretical guarantees of universal hashing with the common-case performance of weak hash functions.
To validate this approach, we implemented the adaptive framework and demonstrated improved performance as well as robustness on string and integer/pointer hashing by capturing real-life key patterns and providing efficient search algorithms.
The design space afforded by the adaptive hashing framework is large, and the adaptation mechanisms investigated in this work hardly cover a substantial or particularly imaginative part of it, leaving ample room for further developments.
In particular, the max-chain-length mechanism can form the basis of a defense against denial-of-service collision attacks without constraining the choice of hash function \citep{aumasson2012siphash}.

Finally, the source code of the SBCL changes and the benchmarking code to reproduce the experimental results are open-sourced and available at \url{https://github.com/melisgl/sbcl/tree/adaptive-hash}.

\section*{Acknowledgements}
We thank Christophe Rhodes, Miloš Stanojević, Andrew Senior, Paul-Virak Khuong, and the reviewers for their valuable comments.

\bibliographystyle{plainnat}
\bibliography{paper}

\appendix


\newpage
\section{Proofs}
\label{sec:proofs}

We restate propositions and provide proofs.

\propexpectedcostofuh*
\begin{proof}
Because we sample a hash function independently for each key, $\pi_i(k_i)$ are independent and distributed uniformly over the key space.
Writing the expected number of comparisons for a lookup as a sum over the $i$th key added to the hash table, we get
\begin{equation*}
\frac{\sum_{i=0}^{n-1} \big(1 + \frac{i}{m}\big)}{n} = \frac{n + \frac{n(n-1)}{2m}}{n} = 1+\frac{n-1}{2m}.
\qedhere
\end{equation*}
\end{proof}

\propperfecthasheshaveminimalcost*
\begin{proof}
A set of hash values is either perfect or non-perfect.
Since all perfect hashes have the same cost $C(U(n,m))$, if we construct a perfect hash with lower cost from any non-perfect hash, it follows that $C(U(n,m))$ is minimal.

Next, we show one such construction.
For any non-perfect set of hash values with bucket counts $c$, there are always two buckets $i$ and $j$ such that $c_i > c_j + 1$ (else it would be a perfect hash due to bucket counts having to sum to $n$).
By moving one hash from bucket $i$ to $j$, we get a new set of hash values with bucket counts $c'$ whose cost is lower because $2n(C(c)-C(c'))=c_i(c_i+1) + c_j(c_j+1) - (c_i-1)c_i - (c_j+1)(c_j+2) = c_i - c_j - 1 > 0$.
\end{proof}

\propexpectedregretofuh*
\begin{proof}
Let $\pi_{1:qm}(k_{1:qm}) = [\pi_1(k_1), \dots, \pi_{qm}(k_{qm})]$.
Then,
\begin{align*}
\E_{\pi_{1:qm}} &R(\pi_{1:qm}(k_{1:qm}), m)\\
&= \E_{\pi_{1:qm}}\bigl[C\bigl(c(\pi_{1:qm}(k_{1:qm}),m)\bigr) - C\bigl(U(n,m)\bigr)\bigr]\\
&= 1+\frac{qm-1}{2m} - \frac{q+1}{2}\\
&= 0.5 + \frac{1}{m},
\end{align*}
which concludes the proof.
\end{proof}

\propexpectedcostofpointermix*
\begin{proof}
First, we look at the case where there can be no collisions between keys on the same page.
This is true if the keys form an arithmetic progression and are not just random subsets.
Due to the subset assumption, it is also true if the hash table is large enough to hold a page worth of keys (shifted by $s$): $log_2(m) \geqslant \mathrm{PB} - s$.

The cost decomposes as the sum of the number of hashes in the same bucket as keys are added one by one.
Thus, when there are $(p-1)$ previous pages' worth of keys already in the hash table, all $u$ keys on the next page will contribute the same amount $1+(p-1)\frac{u}{m}$ to the cost because there are no collisions between them.
\begin{align*}
C &= n^{-1}\sum_{p=1}^{\abs{\mathcal{P}}} u\Pigl(1+(p-1)\frac{u}{m}\Pigr)
= 1+\frac{n-u}{2m}.
\end{align*}

Second, if keys (shifted by $s$) on the same page may collide randomly modulo $m$, then we can expect to get only $um^{-1}2^{\mathrm{PB} - s}$ guaranteed no colliding keys.
Updating our formula, we get that
\begin{align*}
C &= 1+\frac{n - u \min\pigl(1,\frac{2^{\mathrm{PB} - s}}{m}\pigr)}{2m}.
\qedhere
\end{align*}
\end{proof}


\section{The Expected Number of Collisions}
\label{sec:number-of-collisions}

Here, we derive computationally cheap upper bounds on the expected number of collisions with the uniform hash for testing whether the observed number of collisions in \Cref{alg:puthash} is too many.

Given $n$ keys, $m$ buckets, and hash values $h_{1:n}$, let $c$ and $u$ be the number of collisions and unused (empty) buckets, respectively.
The number of used buckets $m-u$ is equal to the number of non-colliding keys $n-c$, so $c=u+n-m$ and it suffices to bound $u$ to bound $c$.
Appealing to the birthday problem \citep{siegrist1197random}, the expected number of unused buckets is
\begin{align*}
\smash[t]{\E u = m \biggl(1-\frac{1}{m}\biggr)^n}.
\end{align*}
Holding the load factor $f=n/m$ constant, the proportion of unused buckets is monotonically increasing in $m$ and
\begin{align*}
\lim_{m \to \infty} \E \frac{u}{m} = \lim_{m \to \infty} \E \piggl(\biggl(1-\frac{1}{m}\biggr)^m\piggr)^{n/m} = \exp\biggl(-\frac{n}{m}\biggr) = e^{\neg f},
\end{align*}
using the product limit formula of the exponential function.
As the first three plots in \Cref{fig:collision-bounds} show, this asymptotic formula is a very tight upper bound even for small $m$.

\begin{figure}[t]
\begin{tikzpicture}
  \begin{axis}[xmin=0, xmax=1, ymax=1, xlabel=load factor $f$,
      ylabel=proportion of empty buckets,
      height=0.75*\linewidth,
      width=0.98\linewidth,
      legend columns=2,
      legend pos=south west]
    \addplot[mglred, thick, domain=0:1] {0.875^(8*x)};
    \addlegendentry{$m=8$};
    \addplot[orange, dashed, thick, domain=0:1] {0.9375^(16*x)};
    \addlegendentry{$m=16$};
    \addplot[mglgreen, densely dashdotted, thick, domain=0:1] {e^(-x)};
    \addlegendentry{$e^{\neg f}$};
    \addplot[mglgreen!50!cyan, loosely dashdotted, thick, domain=0:1] {e^(-x)/0.9};
    \addlegendentry{$e^{\neg f} / 0.9$};
    \addplot[cyan, densely dotted, thick, domain=0:1] {1-9/16*x};
    \addlegendentry{$1-\nicefrac{9}{16}f$};
      legend style={font=\small}]
    \addplot[mglblue, dotted, thick, domain=0:1] {1-0.5*x};
    \addlegendentry{$1-f/2$};
  \end{axis}
\end{tikzpicture}
\caption{Expected proportion of empty buckets with the uniform hash given a load factor (always in $[0,1]$) with the two smallest possible hash table sizes ($m=8$ and $m=16$), the tight upper bound $e^{\neg f}$.
The $e^{\neg f} / 0.9$ adds some slack, while $1-\nicefrac{9}{16}f$ and $1-f/2$ provide increasingly looser but cheaper to compute upper bounds for small hash tables.}
\label{fig:collision-bounds}
\end{figure}

However, $e^{\neg f}$ involves a division and an evaluation of the exponential function.
This overhead is measurable at small hash table sizes.
It is also at small sizes that the hash table is more likely to fit in the cache; thus, the cost of collisions is also smaller, and we may allow more slack.
In practice, we test for too many collisions with the following conditions:
\begin{itemize}
\item $1-f/2 < \frac{u}{m}$ for $m < 1024$;
\item $1-\nicefrac{9}{16}f < \frac{u}{m}$ for $1024 \leq m < 4096$;
\item $e^{\neg f} / 0.9 < \frac{u}{m}$ for $4096 \leq m$.
\end{itemize}
Crucially, $1 - f/2 < \frac{u}{m}$ can be simplified as
\begin{align*}
1 - f/2 &< \frac{u}{m} \\
m - n/2 &< u \\
m - n/2 &< c + m - n \\
n/2 &< c,
\end{align*}
which can be implemented with a single arithmetic shift and an integer comparison.
Similarly, $1-\nicefrac{9}{16}f < \frac{u}{m}$ simplifies to $\nicefrac{7}{16} n < c$, which is equivalent to $(n \gg 1) - (n \gg 4) < c$, involving only two arithmetic shifts.


\section{SBCL Hash Tables}
\label{sec:sbcl-hash-tables}

SBCL hash tables are technically separate-chaining, meaning that there are explicit chains of keys which fall into the same bucket.
Ironically for a Lisp, these chains are not lists: for performance reasons, they are represented by two arrays of indices, called the \lisp{index-vector} and the \lisp{next-vector}, which are only resized when the hash table grows.
The main pieces fit together as follows:
\begin{itemize}
\item The vector \lisp{pairs} holds alternating keys and values in a stable order, which is important for iteration (e.g. \lisp{maphash}).
The first key is at index 2.
\item \lisp{index-vector} is a power-of-2-sized array of indices, that maps a bucket to the index of the first key--value pair in \lisp{pairs} or zero if it's empty.
\item \lisp{next-vector} maps the index of a pair to the index of the next pair in the collision chain or zero at chain end.
It also chains empty slots in \lisp{pairs} together.
\item For all but the lightest hash functions (standard \lisp{eq} and \text{eql} tables), the hash values of all keys in the hash table are cached in \lisp{hash-vector}.
At lookup, the cached hash is compared to the hash of the key being looked up, and if they are different, then we know without invoking the potentially expensive comparison function that they cannot match (\Cref{alg:puthash}).
\end{itemize}
An important operation is \emph{rehashing}.
When the number of buckets increases, keys are reassigned (``rehashed'') to buckets based on their hash values, which are taken from \lisp{hash-vector} if there is one.
Rehashing iterates over \lisp{pairs}, rewriting \lisp{index-vector} and \lisp{next-vector}.

Each standard hash table type (\lisp{eq}, \lisp{eql}, \lisp{equal}, \lisp{equalp}) have separate accessors (GET, PUT, DEL), which are invoked through an indirect call, and the two lighter ones have the hash function and comparison function inlined.
There is no SIMD, and SBCL does not devirtualize calls.


\section{Implementation Details}
\label{sec:implementation-details}

\lisp{Eq} hash tables are initialized with the Constant hash and a \lisp{pairs} vector is allocated for up to 8 key--value pairs.
At this time, there is no \lisp{index-vector} and \lisp{next-vector}.
The \lisp{pairs} vector is doubled in size as more keys are added, but no rehashing is necessary as there are no chains yet.
Once the number of keys exceeds 32, $\mathit{count\_common\_prefix\_bits}$ (see \Cref{alg:detect-shift}) is invoked with 16 keys to guess the shift $s$, we switch to the normal SBCL hash table implementation, set the hash function to $\mathit{pointer\_shift}$ and rehash (\Cref{alg:rehash-eq}).
Switching to the new hash function is implemented as replacing the GET, PUT, DEL implementations via their accessors.

We added new hash table accessors for Murmur and also a new slot for the detected shift $s$ to the hash table structure.
Instead of adding separate accessors for Pointer-Shift, which actually had the best performance in microbenchmarks, we made Pointer-Shift and Prefuzz share accessors, and to the inlined hash function we introduced an $\mathit{if} s = 0$ test that dispatches to Prefuzz.
This was intended to reduce pressure on the instruction cache, which is an important consideration in macrobenchmarks.
To minimize the performance penalty on rehashing, we lift this dispatch out of the rehashing loop in the same way that dispatches to different accessors are.

In \lisp{equal} hash tables, the truncation limit and the current max-chain-length are packed into a single machine integer and stored in the same slot that we used for $s$ in the \lisp{eq} case.


\section{Benchmarking Environment}
\label{sec:benchmarking-environment}

All reported results are from a single Intel Core i7-1185G7 laptop running Linux with the \texttt{performance} scaling governor.
Turbo boost and CPU idle states were disabled (\texttt{cpupower -c 3 idle-set -D 0}).
The benchmarking process was run at maximum priority (\texttt{nice -n -20}), pinned to a single CPU (\texttt{taskset -c 3}).


\section{Microbenchmarking Methodology}
\label{sec:microbenchmarking-methodology}

We plot the estimated time it takes to do a particular operation vs the number of keys (between $1$ and at most $2^{24}$) in the hash table for different key types and allocation patterns.
All hash table variants compared in this paper allocate memory only when the insertion of a new key requires growing the internal data structures.\footnote{Moreover, they do so in an identical and deterministic pattern, so we exclude the time spent in garbage collection from the measurements.
The exception to this pattern is the Constant hash, whose specialized implementation has a smaller than usual memory usage.}
To reduce the computational burden of the benchmarking process, we list ranges of key counts within which no resizing takes place and only measure performance at the minimum and maximum key count in each range, assuming that linear interpolation is a reasonable approximation in between.

At each such key count, we then estimate the average time it takes to perform a hash table operation (e.g. PUT).
First, a set of keys is generated for the given type (e.g. \texttt{FIXNUM}) and allocation pattern (e.g. \texttt{:RND 6}).
The hash table is allocated in an empty state with comparison function \lisp{equal} for string keys or \lisp{eq} in all other cases.
Then, we measure the average time it takes to perform a given operation for keys in the key set:
\begin{itemize}
\item{PUT}: Inserting a key when populating the hash table.
Keys are inserted in the order they were generated.
\item{GET}: Looking up a key in the key set.
Random order.
\item{MISS}: Looking up a key not in the key set.
Random order.
\item{DEL}: Deleting a key in the key set.
Random order.
\end{itemize}
These steps are performed in the order listed.
That is, first the hash table is populated, and PUT is measured, followed by GET then MISS.
Finally, DEL timings are taken.
At the end of the DEL phase, the hash table is once again empty.

These average times over key sets have a low relative standard deviation of about 0\%--2\%.
To reduce the variance further, we take multiple such measurements and report their average.
In particular, we take at least 3 measurements, then continue until the total number operations performed exceeds \num{5000000}.

\begin{table}
  \caption{Estimated means and relative standard errors of real (wall clock) and CPU times in seconds to \emph{compile and load} a set of libraries.}
  \label{tab:load-system-results}
  \centering
  \begin{tabular}{@{}lrrrr@{}}
                   & Real    & $\pm$RSE\% &     CPU & $\pm$RSE\% \\
\midrule
Pr                 & 24.068  &     0.02\% &  24.037 &     0.02\% \\
Mu                 & 24.152  &     0.01\% &  24.117 &     0.01\% \\
Co+Pr              & 23.976  &     0.03\% &  23.945 &     0.02\% \\
Co+Mu              & 23.979  &     0.03\% &  23.943 &     0.03\% \\
Co+Pr>Mu           & 23.988  &     0.02\% &  23.955 &     0.02\% \\
Co+PS>Pr>Mu        & 23.951  &     0.02\% &  23.918 &     0.01\% \\
Equal*             & 23.824  &     0.02\% &  23.792 &     0.02\% \\
\midrule
  \end{tabular}
\end{table}

\begin{table}
  \caption{Estimated times to \emph{test} a set of libraries.}
  \label{tab:test-system-results}
  \centering
  \begin{tabular}{@{}lrrrr@{}}
                   & Real    & $\pm$RSE\% & CPU     & $\pm$RSE\% \\
\midrule
Pr                 & 25.512  &     0.03\% &  24.493 &     0.02\% \\
Mu                 & 25.632  &     0.05\% &  24.632 &     0.04\% \\
Co+Pr              & 25.367  &     0.03\% &  24.352 &     0.02\% \\
Co+Mu              & 25.360  &     0.04\% &  24.342 &     0.03\% \\
Co+Pr>Mu           & 25.385  &     0.04\% &  24.367 &     0.03\% \\
Co+PS>Pr>Mu        & 25.372  &     0.03\% &  24.374 &     0.02\% \\
Equal*             & 25.257  &     0.03\% &  24.256 &     0.02\% \\
\midrule
  \end{tabular}
\end{table}

Finally, with fewer than 100 keys, timing granularity is a limiting factor, so we allocate a number of hash tables (plus the two key sets for each, the second being for MISS) and measure the total time it takes to e.g. populate them.
The number of hash tables is chosen such that the total number of keys is at least 100.
This is a low enough number that the total memory footprint of the allocated hash tables stays below 32KB, the CPU's L1 cache size in our benchmarking environment.
SBCL had a 16GB heap.


\section{Macrobenchmark Results}
\label{sec:macrobenchmark-results}

Here, we describe our macrobenchmarking experiments in \Cref{sec:macrobenchmarks} in more detail.
In the first suite, 16 libaries were compiled and loaded with \lisp{(asdf:load-system <library> :force t)}.
In the second, the tests of the same libraries were run with \lisp{(asdf:test-system <library>)}.
In the third, SBCL tests were run with file-by-file with \texttt{tests/run-tests.sh}.

\begin{table}
  \caption{Estimated times to \emph{run SBCL tests}.}
  \label{tab:sbcl-tests-results}
  \centering
  \begin{tabular}{@{}lrrrr@{}}
                   & Real    & $\pm$RSE\% & CPU     & $\pm$RSE\% \\
\midrule
Pr                 & 582.16 &     0.02\% & 452.34 &  0.03\% \\
Mu                 & 582.85 &     0.02\% & 453.06 &  0.02\% \\
Co+Pr              & 580.07 &     0.02\% & 450.29 &  0.03\% \\
Co+Mu              & 579.44 &     0.02\% & 449.64 &  0.03\% \\
Co+Pr>Mu           & 578.90 &     0.02\% & 449.03 &  0.02\% \\
Co+PS>Pr>Mu        & 578.20 &     0.02\% & 448.39 &  0.03\% \\
Equal*             & 568.89 &     0.02\% & 439.09 &  0.02\% \\
\midrule
  \end{tabular}
\end{table}

We compared the following configurations (following a naming convention like in \Cref{alg:rehash-eq}):
\begin{enumerate}
\item \textbf{Pr}: unchanged baseline version with the Prefuzz \lisp{eq} hash and the non-adaptive \lisp{equal} hash
\item \textbf{Mu}: the \lisp{eq} hash was changed to Murmur3
\item \textbf{Co+Pr}: Constant hash followed by Prefuzz
\item \textbf{Co+Mu}: Constant hash followed by Murmur3
\item \textbf{Co+Pr>Mu}: Constant hash followed by Prefuzz with fallback to Murmur3
\item \textbf{Co+PS>Pr>Mu}: Constant hash followed by Pointer-Shift with fallback to Prefuzz then to Murmur3 (called Adaptive in \Cref{sec:microbenchmarks}).
\item \textbf{Equal*}: Like the previous, but the \lisp{equal} hash is also adaptive (see \Cref{sec:half-an-example}).
\end{enumerate}
Note that since Pr and Co+Pr lack an eventual fallback to Murmur, they have an easy-to-trigger unbounded worst case, so it may be pertinent to think of Mu as the baseline.

In the benchmarking environment described in \Cref{sec:benchmarking-environment}, all individual benchmarks in the three benchmark suites (e.g. running the tests of a single library or a single SBCL test file) were run 10 times on each configuration with their runs interleaved to reduce the effect of correlated noise.
We estimated the mean time a benchmark took on each configuration and computed the standard error of this estimated mean.
We report the total estimated mean times for each benchmark suite along with their relative standard errors (RSE) in \Cref{tab:load-system-results,tab:test-system-results,tab:sbcl-tests-results}.


\raggedbottom

\section{Results for Strings}
\label{sec:results-for-strings}

Keys are sampled from the set of all (about \num{40000}) strings in the running Lisp.
This includes names of symbols, packages, docstrings, etc.
The 10th and 90th percentiles of the distribution of the string lengths is 7 and 39.
The MISS key set consists of random strings with length sampled uniformly from the $[4,44]$ interval.

\begin{figure}[H]
\figstringexistingregret
\label{fig:string-existing-regret-2}
\end{figure}

\begin{figure}[H]
\figstringexistingput
\label{fig:string-existing-put-2}
\end{figure}

\begin{figure}[H]
\figstringexistingget
\label{fig:string-existing-get-2}
\end{figure}

\begin{figure}[H]
\begin{tikzpicture}
  \begin{semilogxaxis}[ylabel=ns / miss,
      xmin=1, xmode=log, log basis x={2},
      ymin=30, ymax=200, ymode=log, log basis y={2},
      legend pos=north west,
      legend style={nodes={scale=0.7, transform shape}},
      legend cell align={left},
      height=0.6*\linewidth,
      width=0.98\linewidth,
    ]
    \pgfplotstableread{data/string-existing-sbcl.tbl}{\sorted}
    \addplot[uniformhash] table [x=nkeys, y=missns] {\sorted};
    \addlegendentry{SBCL};

    \pgfplotstableread{data/string-existing-adaptive.tbl}{\sorted}
    \addplot[adaptivehash] table [x=nkeys, y=missns] {\sorted};
    \addlegendentry{\textbf{Adaptive}};
  \end{semilogxaxis}
\end{tikzpicture}
\caption{MISS timings with string keys}
\label{fig:string-existing-miss}
\end{figure}

\begin{figure}[H]
\begin{tikzpicture}
  \begin{semilogxaxis}[ylabel=ns / del,
      xmin=1, xmode=log, log basis x={2},
      ymin=50, ymax=200, ymode=log, log basis y={2},
      legend pos=north west,
      legend style={nodes={scale=0.7, transform shape}},
      legend cell align={left},
      height=0.6*\linewidth,
      width=0.98\linewidth,
    ]
    \pgfplotstableread{data/string-existing-sbcl.tbl}{\sorted}
    \addplot[uniformhash] table [x=nkeys, y=delns] {\sorted};
    \addlegendentry{SBCL};

    \pgfplotstableread{data/string-existing-adaptive.tbl}{\sorted}
    \addplot[adaptivehash] table [x=nkeys, y=delns] {\sorted};
    \addlegendentry{\textbf{Adaptive}};
  \end{semilogxaxis}
\end{tikzpicture}
\caption{DEL timings with string keys}
\label{fig:string-existing-del}
\end{figure}


\clearpage
\section{Results for \texttt{FIXNUM} \texttt{:PROG 1}}
\label{sec:results-for-fixnum-prog-1}

Keys form  an arithmetic progression with difference \texttt{1} starting from a large random offset and are used in that order for PUT.
We also generate a set of keys not in the hash table for MISS, by using another, suitable offset (so that the two sets are disjunct).
For GET, MISS, and DEL, keys are presented in random order.

\begin{figure}[H]
\figfixnumprogoneregret
\label{fig:fixnum-prog-1-regret-2}
\end{figure}

\begin{figure}[H]
\figfixnumprogoneput
\label{fig:fixnum-prog-1-put-2}
\end{figure}

\begin{figure}[H]
\figfixnumprogoneget
\label{fig:fixnum-prog-1-get-2}
\end{figure}

\begin{figure}[H]
\tikzpicturedependsonfile{data/fixnum-prog-1-murmur}
\tikzpicturedependsonfile{data/fixnum-prog-1-prefuzz}
\tikzpicturedependsonfile{data/fixnum-prog-1-flat}
\tikzpicturedependsonfile{data/fixnum-prog-1-flat-safe-small}
\begin{tikzpicture}
  \begin{axis}[ylabel=ns / miss,
      xmin=1, xmode=log, log basis x={2},
      ymin=6, ymax=180, ymode=log, log basis y={2},
      legend pos=north west,
      legend style={nodes={scale=0.7, transform shape}},
      legend cell align={left},
      height=0.6*\linewidth,
      width=0.98\linewidth,
    ]
    \pgfplotstableread{data/fixnum-prog-1-murmur.tbl}{\sorted}
    \addplot[murmurhash] table [x=nkeys, y=missns] {\sorted};
    \addlegendentry{Murmur};

    \pgfplotstableread{data/fixnum-prog-1-prefuzz.tbl}{\sorted}
    \addplot[prefuzzhash] table [x=nkeys, y=missns] {\sorted};
    \addlegendentry{Prefuzz};

    \pgfplotstableread{data/fixnum-prog-1-flat.tbl}{\sorted}
    \addplot[constanthash] table [x=nkeys, y=missns] {\sorted};
    \addlegendentry{Co+Pr};

    \pgfplotstableread{data/fixnum-prog-1-flat-safe-small.tbl}{\sorted}
    \addplot[adaptivehash] table [x=nkeys, y=missns] {\sorted};
    \addlegendentry{\textbf{Adaptive}};
  \end{axis}
\end{tikzpicture}
\caption{MISS timings with \texttt{FIXNUM} \texttt{:PROG 1}}
\label{fig:fixnum-prog-1-miss}
\end{figure}

\begin{figure}[H]
\tikzpicturedependsonfile{data/fixnum-prog-1-murmur}
\tikzpicturedependsonfile{data/fixnum-prog-1-prefuzz}
\tikzpicturedependsonfile{data/fixnum-prog-1-flat}
\tikzpicturedependsonfile{data/fixnum-prog-1-flat-safe-small}
\begin{tikzpicture}
  \begin{axis}[ylabel=ns / del,
      xmin=1, xmode=log, log basis x={2},
      ymin=6, ymax=180, ymode=log, log basis y={2},
      legend pos=north west,
      legend style={nodes={scale=0.7, transform shape}},
      legend cell align={left},
      height=0.6*\linewidth,
      width=0.98\linewidth,
    ]
    \tikzpicturedependsonfile{data/fixnum-prog-1-murmur}
    \tikzpicturedependsonfile{data/fixnum-prog-1-prefuzz}
    \tikzpicturedependsonfile{data/fixnum-prog-1-flat}
    \tikzpicturedependsonfile{data/fixnum-prog-1-flat-safe-small}

    \pgfplotstableread{data/fixnum-prog-1-murmur.tbl}{\sorted}
    \addplot[murmurhash] table [x=nkeys, y=delns] {\sorted};
    \addlegendentry{Murmur};

    \pgfplotstableread{data/fixnum-prog-1-prefuzz.tbl}{\sorted}
    \addplot[prefuzzhash] table [x=nkeys, y=delns] {\sorted};
    \addlegendentry{Prefuzz};

    \pgfplotstableread{data/fixnum-prog-1-flat.tbl}{\sorted}
    \addplot[constanthash] table [x=nkeys, y=delns] {\sorted};
    \addlegendentry{Co+Pr};

    \pgfplotstableread{data/fixnum-prog-1-flat-safe-small.tbl}{\sorted}
    \addplot[adaptivehash] table [x=nkeys, y=delns] {\sorted};
    \addlegendentry{\textbf{Adaptive}};
  \end{axis}
\end{tikzpicture}
\caption{DEL timings with \texttt{FIXNUM} \texttt{:PROG 1}}
\label{fig:fixnum-prog-1-del}
\end{figure}


\clearpage
\section{Results for \texttt{FLOAT} \texttt{:PROG 1}}
\label{sec:results-for-float-prog-1}

Keys are generated as in \Cref{sec:results-for-fixnum-prog-1}, but the \lisp{fixnum} values are converted to \lisp{single-float}.

\begin{figure}[H]
\figfloatprogoneregret
\label{fig:float-prog-1-regret-2}
\end{figure}

\begin{figure}[H]
\begin{tikzpicture}
  \begin{axis}[ylabel=ns / put,
      xmin=1, xmode=log, log basis x={2},
      ymin=20, ymax=2000, ymode=log, log basis y={2},
      legend pos=north east,
      legend style={nodes={scale=0.7, transform shape}},
      legend cell align={left},
      height=0.6*\linewidth,
      width=0.98\linewidth,
    ]
    \pgfplotstableread{data/float-prog-1-murmur.tbl}{\sorted}
    \addplot[murmurhash] table [x=nkeys, y=putns] {\sorted};
    \addlegendentry{Murmur};

    \pgfplotstableread{data/float-prog-1-prefuzz.tbl}{\sorted}
    \addplot[prefuzzhash] table [x=nkeys, y=putns] {\sorted};
    \addlegendentry{Prefuzz};

    \pgfplotstableread{data/float-prog-1-flat.tbl}{\sorted}
    \addplot[constanthash] table [x=nkeys, y=putns] {\sorted};
    \addlegendentry{Co+Pr};

    \pgfplotstableread{data/float-prog-1-flat-safe-small.tbl}{\sorted}
    \addplot[adaptivehash] table [x=nkeys, y=putns] {\sorted};
    \addlegendentry{\textbf{Adaptive}};
  \end{axis}
\end{tikzpicture}
\caption{PUT timings with \texttt{FLOAT} \texttt{:PROG 1}}
\label{fig:float-prog-1-put}
\end{figure}

\begin{figure}[H]
\begin{tikzpicture}
  \begin{axis}[ylabel=ns / get,
      xmin=1, xmode=log, log basis x={2},
      ymin=6, ymax=2000, ymode=log, log basis y={2},
      legend pos=north east,
      legend style={nodes={scale=0.7, transform shape}},
      legend cell align={left},
      height=0.6*\linewidth,
      width=0.98\linewidth,
    ]
    \pgfplotstableread{data/float-prog-1-murmur.tbl}{\sorted}
    \addplot[murmurhash] table [x=nkeys, y=getns] {\sorted};
    \addlegendentry{Murmur};

    \pgfplotstableread{data/float-prog-1-prefuzz.tbl}{\sorted}
    \addplot[prefuzzhash] table [x=nkeys, y=getns] {\sorted};
    \addlegendentry{Prefuzz};

    \pgfplotstableread{data/float-prog-1-flat.tbl}{\sorted}
    \addplot[constanthash] table [x=nkeys, y=getns] {\sorted};
    \addlegendentry{Co+Pr};

    \pgfplotstableread{data/float-prog-1-flat-safe-small.tbl}{\sorted}
    \addplot[adaptivehash] table [x=nkeys, y=getns] {\sorted};
    \addlegendentry{\textbf{Adaptive}};
  \end{axis}
\end{tikzpicture}
\caption{GET timings with \texttt{FLOAT} \texttt{:PROG 1}}
\label{fig:float-prog-1-get}
\end{figure}

\begin{figure}[H]
\tikzpicturedependsonfile{data/float-prog-1-murmur}
\tikzpicturedependsonfile{data/float-prog-1-prefuzz}
\tikzpicturedependsonfile{data/float-prog-1-flat}
\tikzpicturedependsonfile{data/float-prog-1-flat-safe-small}
\begin{tikzpicture}
  \begin{axis}[ylabel=ns / miss,
      xmin=1, xmode=log, log basis x={2},
      ymin=6, ymax=2000, ymode=log, log basis y={2},
      legend pos=north east,
      legend style={nodes={scale=0.7, transform shape}},
      legend cell align={left},
      height=0.6*\linewidth,
      width=0.98\linewidth,
    ]
    \pgfplotstableread{data/float-prog-1-murmur.tbl}{\sorted}
    \addplot[murmurhash] table [x=nkeys, y=missns] {\sorted};
    \addlegendentry{Murmur};

    \pgfplotstableread{data/float-prog-1-prefuzz.tbl}{\sorted}
    \addplot[prefuzzhash] table [x=nkeys, y=missns] {\sorted};
    \addlegendentry{Prefuzz};

    \pgfplotstableread{data/float-prog-1-flat.tbl}{\sorted}
    \addplot[constanthash] table [x=nkeys, y=missns] {\sorted};
    \addlegendentry{Co+Pr};

    \pgfplotstableread{data/float-prog-1-flat-safe-small.tbl}{\sorted}
    \addplot[adaptivehash] table [x=nkeys, y=missns] {\sorted};
    \addlegendentry{\textbf{Adaptive}};
  \end{axis}
\end{tikzpicture}
\caption{MISS timings with \texttt{FLOAT} \texttt{:PROG 1}}
\label{fig:float-prog-1-miss}
\end{figure}

\begin{figure}[H]
\tikzpicturedependsonfile{data/float-prog-1-murmur}
\tikzpicturedependsonfile{data/float-prog-1-prefuzz}
\tikzpicturedependsonfile{data/float-prog-1-flat}
\tikzpicturedependsonfile{data/float-prog-1-flat-safe-small}
\begin{tikzpicture}
  \begin{axis}[ylabel=ns / del,
      xmin=1, xmode=log, log basis x={2},
      ymin=6, ymax=2000, ymode=log, log basis y={2},
      legend pos=north east,
      legend style={nodes={scale=0.7, transform shape}},
      legend cell align={left},
      height=0.6*\linewidth,
      width=0.98\linewidth,
    ]
    \tikzpicturedependsonfile{data/float-prog-1-murmur}
    \tikzpicturedependsonfile{data/float-prog-1-prefuzz}
    \tikzpicturedependsonfile{data/float-prog-1-flat}
    \tikzpicturedependsonfile{data/float-prog-1-flat-safe-small}

    \pgfplotstableread{data/float-prog-1-murmur.tbl}{\sorted}
    \addplot[murmurhash] table [x=nkeys, y=delns] {\sorted};
    \addlegendentry{Murmur};

    \pgfplotstableread{data/float-prog-1-prefuzz.tbl}{\sorted}
    \addplot[prefuzzhash] table [x=nkeys, y=delns] {\sorted};
    \addlegendentry{Prefuzz};

    \pgfplotstableread{data/float-prog-1-flat.tbl}{\sorted}
    \addplot[constanthash] table [x=nkeys, y=delns] {\sorted};
    \addlegendentry{Co+Pr};

    \pgfplotstableread{data/float-prog-1-flat-safe-small.tbl}{\sorted}
    \addplot[adaptivehash] table [x=nkeys, y=delns] {\sorted};
    \addlegendentry{\textbf{Adaptive}};
  \end{axis}
\end{tikzpicture}
\caption{DEL timings with \texttt{FLOAT} \texttt{:PROG 1}}
\label{fig:float-prog-1-del}
\end{figure}


\clearpage
\section{Results for \texttt{FIXNUM} \texttt{:PROG 12}}
\label{sec:results-for-fixnum-prog-12}

Same as \texttt{FIXNUM} \texttt{:PROG 1}, but with a difference of 12.

\begin{figure}[H]
\figfixnumprogtwelveregret
\label{fig:fixnum-prog-12-regret-2}
\end{figure}

\begin{figure}[H]
\begin{tikzpicture}
  \begin{axis}[ylabel=ns / put,
      xmin=1, xmode=log, log basis x={2},
      ymin=20, ymax=180, ymode=log, log basis y={2},
      legend pos=north west,
      legend style={nodes={scale=0.7, transform shape}},
      legend cell align={left},
      height=0.6*\linewidth,
      width=0.98\linewidth,
    ]
    \pgfplotstableread{data/fixnum-prog-12-murmur.tbl}{\sorted}
    \addplot[murmurhash] table [x=nkeys, y=putns] {\sorted};
    \addlegendentry{Murmur};

    \pgfplotstableread{data/fixnum-prog-12-prefuzz.tbl}{\sorted}
    \addplot[prefuzzhash] table [x=nkeys, y=putns] {\sorted};
    \addlegendentry{Prefuzz};

    \pgfplotstableread{data/fixnum-prog-12-flat.tbl}{\sorted}
    \addplot[constanthash] table [x=nkeys, y=putns] {\sorted};
    \addlegendentry{Co+Pr};

    \pgfplotstableread{data/fixnum-prog-12-flat-safe-small.tbl}{\sorted}
    \addplot[adaptivehash] table [x=nkeys, y=putns] {\sorted};
    \addlegendentry{\textbf{Adaptive}};
  \end{axis}
\end{tikzpicture}
\caption{PUT timings with \texttt{FIXNUM} \texttt{:PROG 12}.
Prefuzz outperforms Murmur due to its speed, lower regret and cache-friendliness.
Adaptive is able to benefit from its advantage in regret only at larger sizes.}
\label{fig:fixnum-prog-12-put}
\end{figure}

\begin{figure}[H]
\begin{tikzpicture}
  \begin{axis}[ylabel=ns / get,
      xmin=1, xmode=log, log basis x={2},
      ymin=6, ymax=180, ymode=log, log basis y={2},
      legend pos=north west,
      legend style={nodes={scale=0.7, transform shape}},
      legend cell align={left},
      height=0.6*\linewidth,
      width=0.98\linewidth,
    ]
    \pgfplotstableread{data/fixnum-prog-12-murmur.tbl}{\sorted}
    \addplot[murmurhash] table [x=nkeys, y=getns] {\sorted};
    \addlegendentry{Murmur};

    \pgfplotstableread{data/fixnum-prog-12-prefuzz.tbl}{\sorted}
    \addplot[prefuzzhash] table [x=nkeys, y=getns] {\sorted};
    \addlegendentry{Prefuzz};

    \pgfplotstableread{data/fixnum-prog-12-flat.tbl}{\sorted}
    \addplot[constanthash] table [x=nkeys, y=getns] {\sorted};
    \addlegendentry{Co+Pr};

    \pgfplotstableread{data/fixnum-prog-12-flat-safe-small.tbl}{\sorted}
    \addplot[adaptivehash] table [x=nkeys, y=getns] {\sorted};
    \addlegendentry{\textbf{Adaptive}};
  \end{axis}
\end{tikzpicture}
\caption{GET timings with \texttt{FIXNUM} \texttt{:PROG 12}.
Compared to PUT, differences in regret translate more clearly to lookup performance because keys are queried in random order.}
\label{fig:fixnum-prog-12-get}
\end{figure}

\begin{figure}[H]
\begin{tikzpicture}
  \begin{axis}[ylabel=ns / miss,
      xmin=1, xmode=log, log basis x={2},
      ymin=6, ymax=180, ymode=log, log basis y={2},
      legend pos=north west,
      legend style={nodes={scale=0.7, transform shape}},
      legend cell align={left},
      height=0.6*\linewidth,
      width=0.98\linewidth,
    ]
    \pgfplotstableread{data/fixnum-prog-12-murmur.tbl}{\sorted}
    \addplot[murmurhash] table [x=nkeys, y=missns] {\sorted};
    \addlegendentry{Murmur};

    \pgfplotstableread{data/fixnum-prog-12-prefuzz.tbl}{\sorted}
    \addplot[prefuzzhash] table [x=nkeys, y=missns] {\sorted};
    \addlegendentry{Prefuzz};

    \pgfplotstableread{data/fixnum-prog-12-flat.tbl}{\sorted}
    \addplot[constanthash] table [x=nkeys, y=missns] {\sorted};
    \addlegendentry{Co+Pr};

    \pgfplotstableread{data/fixnum-prog-12-flat-safe-small.tbl}{\sorted}
    \addplot[adaptivehash] table [x=nkeys, y=missns] {\sorted};
    \addlegendentry{\textbf{Adaptive}};
  \end{axis}
\end{tikzpicture}
\caption{MISS timings with \texttt{FIXNUM} \texttt{:PROG 12}}
\label{fig:fixnum-prog-12-miss}
\end{figure}

\begin{figure}[H]
\begin{tikzpicture}
  \begin{axis}[ylabel=ns / del,
      xmin=1, xmode=log, log basis x={2},
      ymin=6, ymax=180, ymode=log, log basis y={2},
      legend pos=north west,
      legend style={nodes={scale=0.7, transform shape}},
      legend cell align={left},
      height=0.6*\linewidth,
      width=0.98\linewidth,
    ]
    \pgfplotstableread{data/fixnum-prog-12-murmur.tbl}{\sorted}
    \addplot[murmurhash] table [x=nkeys, y=delns] {\sorted};
    \addlegendentry{Murmur};

    \pgfplotstableread{data/fixnum-prog-12-prefuzz.tbl}{\sorted}
    \addplot[prefuzzhash] table [x=nkeys, y=delns] {\sorted};
    \addlegendentry{Prefuzz};

    \pgfplotstableread{data/fixnum-prog-12-flat.tbl}{\sorted}
    \addplot[constanthash] table [x=nkeys, y=delns] {\sorted};
    \addlegendentry{Co+Pr};

    \pgfplotstableread{data/fixnum-prog-12-flat-safe-small.tbl}{\sorted}
    \addplot[adaptivehash] table [x=nkeys, y=delns] {\sorted};
    \addlegendentry{\textbf{Adaptive}};
  \end{axis}
\end{tikzpicture}
\caption{DEL timings with \texttt{FIXNUM} \texttt{:PROG 12}}
\label{fig:fixnum-prog-12-del}
\end{figure}


\clearpage
\section{Results for \texttt{FIXNUM} \texttt{:RND 6}}
\label{sec:results-for-fixnum-rnd-6}

Similar to \texttt{:PROG 6}, but the difference between successive keys is sampled uniformly from the $[0, 5]$ interval.

\begin{figure}[H]
\begin{tikzpicture}
  \begin{axis}[xlabel=number of keys, ylabel=regret,
      xmin=1, xmode=log, log basis x={2},
      ymin=-0.01, ymax=0.79,
      legend pos=north west,
      legend columns=2,
      legend style={nodes={scale=0.7, transform shape}},
      legend cell align={left},
      height=0.6*\linewidth,
      width=0.98\linewidth,
    ]
    \pgfplotstableread{data/fixnum-rnd-6-murmur.tbl}{\sorted}
    \addplot[uniformhash] table [x=nkeys, y=rndregret] {\sorted};
    \addlegendentry{Uniform};

    \pgfplotstableread{data/fixnum-rnd-6-murmur.tbl}{\sorted}
    \addplot[murmurhash] table [x=nkeys, y=regret] {\sorted};
    \addlegendentry{Murmur};

    \pgfplotstableread{data/fixnum-rnd-6-prefuzz.tbl}{\sorted}
    \addplot[prefuzzhash] table [x=nkeys, y=regret] {\sorted};
    \addlegendentry{Prefuzz};

    \pgfplotstableread{data/fixnum-rnd-6-flat.tbl}{\sorted}
    \addplot[constanthash] table [x=nkeys, y=regret] {\sorted};
    \addlegendentry{Co+Pr};

    \pgfplotstableread{data/fixnum-rnd-6-flat-safe-small.tbl}{\sorted}
    \addplot[adaptivehash] table [x=nkeys, y=regret] {\sorted};
    \addlegendentry{\textbf{Adaptive}};
    \addplot[mark=*,mark options={scale=1.0}] coordinates {(33,0)};
  \end{axis}
\end{tikzpicture}
\caption{Regret with \texttt{FIXNUM} \texttt{:RND 6}.
Prefuzz does better than Murmur initially but gradually loses its edge.
Adaptive keeps its edge.}
\label{fig:fixnum-rnd-6-regret}
\end{figure}

\begin{figure}[H]
\begin{tikzpicture}
  \begin{axis}[ylabel=ns / put,
      xmin=1, xmode=log, log basis x={2},
      ymin=20, ymax=180, ymode=log, log basis y={2},
      legend pos=north west,
      legend style={nodes={scale=0.7, transform shape}},
      legend cell align={left},
      height=0.6*\linewidth,
      width=0.98\linewidth,
    ]
    \pgfplotstableread{data/fixnum-rnd-6-murmur.tbl}{\sorted}
    \addplot[murmurhash] table [x=nkeys, y=putns] {\sorted};
    \addlegendentry{Murmur};

    \pgfplotstableread{data/fixnum-rnd-6-prefuzz.tbl}{\sorted}
    \addplot[prefuzzhash] table [x=nkeys, y=putns] {\sorted};
    \addlegendentry{Prefuzz};

    \pgfplotstableread{data/fixnum-rnd-6-flat.tbl}{\sorted}
    \addplot[constanthash] table [x=nkeys, y=putns] {\sorted};
    \addlegendentry{Co+Pr};

    \pgfplotstableread{data/fixnum-rnd-6-flat-safe-small.tbl}{\sorted}
    \addplot[adaptivehash] table [x=nkeys, y=putns] {\sorted};
    \addlegendentry{\textbf{Adaptive}};
  \end{axis}
\end{tikzpicture}
\caption{PUT timings with \texttt{FIXNUM} \texttt{:RND 6}. Once again, the advantage of Prefuzz over Murmur grows with size because it is friendlier to the cache. Adaptive manages to translate some of its lead in regret into outright speed.}
\label{fig:fixnum-rnd-6-put}
\end{figure}

\begin{figure}[H]
\begin{tikzpicture}
  \begin{axis}[ylabel=ns / get,
      xmin=1, xmode=log, log basis x={2},
      ymin=6, ymax=180, ymode=log, log basis y={2},
      legend pos=north west,
      legend style={nodes={scale=0.7, transform shape}},
      legend cell align={left},
      height=0.6*\linewidth,
      width=0.98\linewidth,
    ]
    \pgfplotstableread{data/fixnum-rnd-6-murmur.tbl}{\sorted}
    \addplot[murmurhash] table [x=nkeys, y=getns] {\sorted};
    \addlegendentry{Murmur};

    \pgfplotstableread{data/fixnum-rnd-6-prefuzz.tbl}{\sorted}
    \addplot[prefuzzhash] table [x=nkeys, y=getns] {\sorted};
    \addlegendentry{Prefuzz};

    \pgfplotstableread{data/fixnum-rnd-6-flat.tbl}{\sorted}
    \addplot[constanthash] table [x=nkeys, y=getns] {\sorted};
    \addlegendentry{Co+Pr};

    \pgfplotstableread{data/fixnum-rnd-6-flat-safe-small.tbl}{\sorted}
    \addplot[adaptivehash] table [x=nkeys, y=getns] {\sorted};
    \addlegendentry{\textbf{Adaptive}};
  \end{axis}
\end{tikzpicture}
\caption{GET timings with \texttt{FIXNUM} \texttt{:RND 6}. Prefuzz is considerably ahead of Murmur at all sizes. Adaptive is better than Prefuzz at larger sizes.}
\label{fig:fixnum-rnd-6-get}
\end{figure}

\begin{figure}[H]
\begin{tikzpicture}
  \begin{axis}[ylabel=ns / miss,
      xmin=1, xmode=log, log basis x={2},
      ymin=6, ymax=180, ymode=log, log basis y={2},
      legend pos=north west,
      legend style={nodes={scale=0.7, transform shape}},
      legend cell align={left},
      height=0.6*\linewidth,
      width=0.98\linewidth,
    ]
    \pgfplotstableread{data/fixnum-rnd-6-murmur.tbl}{\sorted}
    \addplot[murmurhash] table [x=nkeys, y=missns] {\sorted};
    \addlegendentry{Murmur};

    \pgfplotstableread{data/fixnum-rnd-6-prefuzz.tbl}{\sorted}
    \addplot[prefuzzhash] table [x=nkeys, y=missns] {\sorted};
    \addlegendentry{Prefuzz};

    \pgfplotstableread{data/fixnum-rnd-6-flat.tbl}{\sorted}
    \addplot[constanthash] table [x=nkeys, y=missns] {\sorted};
    \addlegendentry{Co+Pr};

    \pgfplotstableread{data/fixnum-rnd-6-flat-safe-small.tbl}{\sorted}
    \addplot[adaptivehash] table [x=nkeys, y=missns] {\sorted};
    \addlegendentry{\textbf{Adaptive}};
  \end{axis}
\end{tikzpicture}
\caption{MISS timings with \texttt{FIXNUM} \texttt{:RND 6}}
\label{fig:fixnum-rnd-6-miss}
\end{figure}

\begin{figure}[H]
\begin{tikzpicture}
  \begin{axis}[ylabel=ns / del,
      xmin=1, xmode=log, log basis x={2},
      ymin=6, ymax=180, ymode=log, log basis y={2},
      legend pos=north west,
      legend style={nodes={scale=0.7, transform shape}},
      legend cell align={left},
      height=0.6*\linewidth,
      width=0.98\linewidth,
    ]
    \pgfplotstableread{data/fixnum-rnd-6-murmur.tbl}{\sorted}
    \addplot[murmurhash] table [x=nkeys, y=delns] {\sorted};
    \addlegendentry{Murmur};

    \pgfplotstableread{data/fixnum-rnd-6-prefuzz.tbl}{\sorted}
    \addplot[prefuzzhash] table [x=nkeys, y=delns] {\sorted};
    \addlegendentry{Prefuzz};

    \pgfplotstableread{data/fixnum-rnd-6-flat.tbl}{\sorted}
    \addplot[constanthash] table [x=nkeys, y=delns] {\sorted};
    \addlegendentry{Co+Pr};

    \pgfplotstableread{data/fixnum-rnd-6-flat-safe-small.tbl}{\sorted}
    \addplot[adaptivehash] table [x=nkeys, y=delns] {\sorted};
    \addlegendentry{\textbf{Adaptive}};
  \end{axis}
\end{tikzpicture}
\caption{DEL timings with \texttt{FIXNUM} \texttt{:RND 6}}
\label{fig:fixnum-rnd-6-del}
\end{figure}


\clearpage
\section{Results for \texttt{CONS} \texttt{:RND 6}}
\label{sec:results-for-cons-rnd-6}

Like the \texttt{FIXNUM} \texttt{:RND 6}, but keys are \lisp{cons} objects with a random number of conses allocated between them.
There is no explicit random offset here; we rely on the addresses assigned by the memory allocator.
To prevent the garbage collector from compacting memory regions holding these objects in memory, the skipped over conses are kept alive.

In all experiments, there is some unexplained weirdness at large sizes, which affects all hashes except Murmur.
Even the regret of Adaptive is affected: it improves to an unexpected level but does so very erratically.

\begin{figure}[H]
\begin{tikzpicture}
  \begin{axis}[xlabel=number of keys, ylabel=regret,
      xmin=1, xmode=log, log basis x={2},
      ymin=-0.01, ymax=0.79,
      legend pos=north west,
      legend columns=2,
      legend style={nodes={scale=0.7, transform shape}},
      legend cell align={left},
      height=0.6*\linewidth,
      width=0.98\linewidth,
    ]
    \pgfplotstableread{data/cons-rnd-6-murmur.tbl}{\sorted}
    \addplot[uniformhash] table [x=nkeys, y=rndregret] {\sorted};
    \addlegendentry{Uniform};

    \pgfplotstableread{data/cons-rnd-6-murmur.tbl}{\sorted}
    \addplot[murmurhash] table [x=nkeys, y=regret] {\sorted};
    \addlegendentry{Murmur};

    \pgfplotstableread{data/cons-rnd-6-prefuzz.tbl}{\sorted}
    \addplot[prefuzzhash] table [x=nkeys, y=regret] {\sorted};
    \addlegendentry{Prefuzz};

    \pgfplotstableread{data/cons-rnd-6-flat.tbl}{\sorted}
    \addplot[constanthash] table [x=nkeys, y=regret] {\sorted};
    \addlegendentry{Co+Pr};

    \pgfplotstableread{data/cons-rnd-6-flat-safe-small.tbl}{\sorted}
    \addplot[adaptivehash] table [x=nkeys, y=regret] {\sorted};
    \addlegendentry{\textbf{Adaptive}};
    \addplot[mark=*,mark options={scale=1.0}] coordinates {(33,0)};
  \end{axis}
\end{tikzpicture}
\caption{Regret with \texttt{CONS} \texttt{:RND 6}.
Murmur closely tracks Uniform.
Prefuzz works well at small sizes.
Adaptive is a bit better still.}
\label{fig:cons-rnd-6-regret}
\end{figure}

\begin{figure}[H]
\begin{tikzpicture}
  \begin{axis}[ylabel=ns / put,
      xmin=1, xmode=log, log basis x={2},
      ymin=20, ymax=180, ymode=log, log basis y={2},
      legend pos=north west,
      legend style={nodes={scale=0.7, transform shape}},
      legend cell align={left},
      height=0.6*\linewidth,
      width=0.98\linewidth,
    ]
    \pgfplotstableread{data/cons-rnd-6-murmur.tbl}{\sorted}
    \addplot[murmurhash] table [x=nkeys, y=putns] {\sorted};
    \addlegendentry{Murmur};

    \pgfplotstableread{data/cons-rnd-6-prefuzz.tbl}{\sorted}
    \addplot[prefuzzhash] table [x=nkeys, y=putns] {\sorted};
    \addlegendentry{Prefuzz};

    \pgfplotstableread{data/cons-rnd-6-flat.tbl}{\sorted}
    \addplot[constanthash] table [x=nkeys, y=putns] {\sorted};
    \addlegendentry{Co+Pr};

    \pgfplotstableread{data/cons-rnd-6-flat-safe-small.tbl}{\sorted}
    \addplot[adaptivehash] table [x=nkeys, y=putns] {\sorted};
    \addlegendentry{\textbf{Adaptive}};
  \end{axis}
\end{tikzpicture}
\caption{PUT timings with \texttt{CONS} \texttt{:RND 6}. Prefuzz outperforms Murmur even at large sizes despite its higher regret because its collisions are between subsequent elements of the progression, which is friendly to the cache.}
\label{fig:cons-rnd-6-put}
\end{figure}

\begin{figure}[H]
\begin{tikzpicture}
  \begin{axis}[ylabel=ns / get,
      xmin=1, xmode=log, log basis x={2},
      ymin=6, ymax=180, ymode=log, log basis y={2},
      legend pos=north west,
      legend style={nodes={scale=0.7, transform shape}},
      legend cell align={left},
      height=0.6*\linewidth,
      width=0.98\linewidth,
    ]
    \pgfplotstableread{data/cons-rnd-6-murmur.tbl}{\sorted}
    \addplot[murmurhash] table [x=nkeys, y=getns] {\sorted};
    \addlegendentry{Murmur};

    \pgfplotstableread{data/cons-rnd-6-prefuzz.tbl}{\sorted}
    \addplot[prefuzzhash] table [x=nkeys, y=getns] {\sorted};
    \addlegendentry{Prefuzz};

    \pgfplotstableread{data/cons-rnd-6-flat.tbl}{\sorted}
    \addplot[constanthash] table [x=nkeys, y=getns] {\sorted};
    \addlegendentry{Co+Pr};

    \pgfplotstableread{data/cons-rnd-6-flat-safe-small.tbl}{\sorted}
    \addplot[adaptivehash] table [x=nkeys, y=getns] {\sorted};
    \addlegendentry{\textbf{Adaptive}};
  \end{axis}
\end{tikzpicture}
\caption{GET timings with \texttt{CONS} \texttt{:RND 6}}
\label{fig:cons-rnd-6-get}
\end{figure}

\begin{figure}[H]
\begin{tikzpicture}
  \begin{axis}[ylabel=ns / miss,
      xmin=1, xmode=log, log basis x={2},
      ymin=6, ymax=180, ymode=log, log basis y={2},
      legend pos=north west,
      legend style={nodes={scale=0.7, transform shape}},
      legend cell align={left},
      height=0.6*\linewidth,
      width=0.98\linewidth,
    ]
    \pgfplotstableread{data/cons-rnd-6-murmur.tbl}{\sorted}
    \addplot[murmurhash] table [x=nkeys, y=missns] {\sorted};
    \addlegendentry{Murmur};

    \pgfplotstableread{data/cons-rnd-6-prefuzz.tbl}{\sorted}
    \addplot[prefuzzhash] table [x=nkeys, y=missns] {\sorted};
    \addlegendentry{Prefuzz};

    \pgfplotstableread{data/cons-rnd-6-flat.tbl}{\sorted}
    \addplot[constanthash] table [x=nkeys, y=missns] {\sorted};
    \addlegendentry{Co+Pr};

    \pgfplotstableread{data/cons-rnd-6-flat-safe-small.tbl}{\sorted}
    \addplot[adaptivehash] table [x=nkeys, y=missns] {\sorted};
    \addlegendentry{\textbf{Adaptive}};
  \end{axis}
\end{tikzpicture}
\caption{MISS timings with \texttt{CONS} \texttt{:RND 6}}
\label{fig:cons-rnd-6-miss}
\end{figure}

\begin{figure}[H]
\begin{tikzpicture}
  \begin{axis}[ylabel=ns / del,
      xmin=1, xmode=log, log basis x={2},
      ymin=6, ymax=180, ymode=log, log basis y={2},
      legend pos=north west,
      legend style={nodes={scale=0.7, transform shape}},
      legend cell align={left},
      height=0.6*\linewidth,
      width=0.98\linewidth,
    ]
    \pgfplotstableread{data/cons-rnd-6-murmur.tbl}{\sorted}
    \addplot[murmurhash] table [x=nkeys, y=delns] {\sorted};
    \addlegendentry{Murmur};

    \pgfplotstableread{data/cons-rnd-6-prefuzz.tbl}{\sorted}
    \addplot[prefuzzhash] table [x=nkeys, y=delns] {\sorted};
    \addlegendentry{Prefuzz};

    \pgfplotstableread{data/cons-rnd-6-flat.tbl}{\sorted}
    \addplot[constanthash] table [x=nkeys, y=delns] {\sorted};
    \addlegendentry{Co+Pr};

    \pgfplotstableread{data/cons-rnd-6-flat-safe-small.tbl}{\sorted}
    \addplot[adaptivehash] table [x=nkeys, y=delns] {\sorted};
    \addlegendentry{\textbf{Adaptive}};
  \end{axis}
\end{tikzpicture}
\caption{DEL timings with \texttt{CONS} \texttt{:RND 6}}
\label{fig:cons-rnd-6-del}
\end{figure}


\clearpage
\section{Results for \texttt{SYMBOL} \texttt{:EXISTING}}
\label{sec:results-for-symbol-existing}

\balance

We list all symbols in the running Lisp system and take a random subset of the desired size.
These symbols happen to be packed tightly in not too many pages, and their addresses, if sorted, would approximately follow an arithmetic progression, which is a great fit for Pointer-Shift (\Cref{sec:the-pointer-shift-hash}, \Cref{alg:rehash-eq}).
But the keys are randomly selected and presented in random order even for PUT, so this effect can only be seen at close to maximal sizes.

The MISS key set is generated with \lisp{(gensym)}.

\begin{figure}[H]
\begin{tikzpicture}
  \begin{axis}[xlabel=number of keys, ylabel=regret,
      xmin=1, xmode=log, log basis x={2},
      ymin=-0.01, ymax=0.79,
      legend pos=north west,
      legend columns=2,
      legend style={nodes={scale=0.7, transform shape}},
      legend cell align={left},
      height=0.6*\linewidth,
      width=0.98\linewidth,
    ]
    \tikzpicturedependsonfile{data/symbol-existing-murmur}
    \tikzpicturedependsonfile{data/symbol-existing-prefuzz}
    \tikzpicturedependsonfile{data/symbol-existing-flat}
    \tikzpicturedependsonfile{data/symbol-existing-flat-safe-small}

    \pgfplotstableread{data/symbol-existing-murmur.tbl}{\sorted}
    \addplot[uniformhash] table [x=nkeys, y=rndregret] {\sorted};
    \addlegendentry{Uniform};

    \pgfplotstableread{data/symbol-existing-murmur.tbl}{\sorted}
    \addplot[murmurhash] table [x=nkeys, y=regret] {\sorted};
    \addlegendentry{Murmur};

    \pgfplotstableread{data/symbol-existing-prefuzz.tbl}{\sorted}
    \addplot[prefuzzhash] table [x=nkeys, y=regret] {\sorted};
    \addlegendentry{Prefuzz};

    \pgfplotstableread{data/symbol-existing-flat.tbl}{\sorted}
    \addplot[constanthash] table [x=nkeys, y=regret] {\sorted};
    \addlegendentry{Co+Pr};

    \pgfplotstableread{data/symbol-existing-flat-safe-small.tbl}{\sorted}
    \addplot[adaptivehash] table [x=nkeys, y=regret] {\sorted};
    \addlegendentry{\textbf{Adaptive}};
    \addplot[mark=*,mark options={scale=1.0}] coordinates {(33,0)};
  \end{axis}
\end{tikzpicture}
\caption{Regret with \texttt{SYMBOL} \texttt{:EXISTING}.
All hashes closely track Uniform.
Adaptive, which leaves memory addresses most intact, takes advantage of the nature of the data at close to maximal sizes.}
\label{fig:symbol-existing-regret}
\end{figure}

\begin{figure}[H]
\begin{tikzpicture}
  \begin{axis}[ylabel=ns / put,
      xmin=1, xmode=log, log basis x={2},
      ymin=20, ymax=180, ymode=log, log basis y={2},
      legend pos=north west,
      legend columns=2,
      legend style={nodes={scale=0.7, transform shape}},
      legend cell align={left},
      height=0.6*\linewidth,
      width=0.98\linewidth,
    ]
    \tikzpicturedependsonfile{data/symbol-existing-murmur}
    \tikzpicturedependsonfile{data/symbol-existing-prefuzz}
    \tikzpicturedependsonfile{data/symbol-existing-flat}
    \tikzpicturedependsonfile{data/symbol-existing-flat-safe-small}

    \pgfplotstableread{data/symbol-existing-murmur.tbl}{\sorted}
    \addplot[murmurhash] table [x=nkeys, y=putns] {\sorted};
    \addlegendentry{Murmur};

    \pgfplotstableread{data/symbol-existing-prefuzz.tbl}{\sorted}
    \addplot[prefuzzhash] table [x=nkeys, y=putns] {\sorted};
    \addlegendentry{Prefuzz};

    \pgfplotstableread{data/symbol-existing-flat.tbl}{\sorted}
    \addplot[constanthash] table [x=nkeys, y=putns] {\sorted};
    \addlegendentry{Co+Pr};

    \pgfplotstableread{data/symbol-existing-flat-safe-small.tbl}{\sorted}
    \addplot[adaptivehash] table [x=nkeys, y=putns] {\sorted};
    \addlegendentry{\textbf{Adaptive}};
  \end{axis}
\end{tikzpicture}
\caption{PUT timings with \texttt{SYMBOL} \texttt{:EXISTING}.
Prefuzz outperforms Murmur because it is faster to compute.
The advantage of Co+Pr and Adaptive over Prefuzz comes almost exclusively from the initial Constant hash phase, with Adaptive enjoying a small edge due to its better regret at the very end.}
\label{fig:symbol-existing-put}
\end{figure}

\begin{figure}[H]
\begin{tikzpicture}
  \begin{axis}[ylabel=ns / get,
      xmin=1, xmode=log, log basis x={2},
      ymin=6, ymax=180, ymode=log, log basis y={2},
      legend pos=north west,
      legend style={nodes={scale=0.7, transform shape}},
      legend cell align={left},
      height=0.6*\linewidth,
      width=0.98\linewidth,
    ]
    \tikzpicturedependsonfile{data/symbol-existing-murmur}
    \tikzpicturedependsonfile{data/symbol-existing-prefuzz}
    \tikzpicturedependsonfile{data/symbol-existing-flat}
    \tikzpicturedependsonfile{data/symbol-existing-flat-safe-small}

    \pgfplotstableread{data/symbol-existing-murmur.tbl}{\sorted}
    \addplot[murmurhash] table [x=nkeys, y=getns] {\sorted};
    \addlegendentry{Murmur};

    \pgfplotstableread{data/symbol-existing-prefuzz.tbl}{\sorted}
    \addplot[prefuzzhash] table [x=nkeys, y=getns] {\sorted};
    \addlegendentry{Prefuzz};

    \pgfplotstableread{data/symbol-existing-flat.tbl}{\sorted}
    \addplot[constanthash] table [x=nkeys, y=getns] {\sorted};
    \addlegendentry{Co+Pr};

    \pgfplotstableread{data/symbol-existing-flat-safe-small.tbl}{\sorted}
    \addplot[adaptivehash] table [x=nkeys, y=getns] {\sorted};
    \addlegendentry{\textbf{Adaptive}};
  \end{axis}
\end{tikzpicture}
\caption{GET timings with \texttt{SYMBOL} \texttt{:EXISTING}.
}
\label{fig:symbol-existing-get}
\end{figure}

\begin{figure}[H]
\begin{tikzpicture}
  \begin{axis}[ylabel=ns / miss,
      xmin=1, xmode=log, log basis x={2},
      ymin=6, ymax=180, ymode=log, log basis y={2},
      legend pos=north west,
      legend style={nodes={scale=0.7, transform shape}},
      legend cell align={left},
      height=0.6*\linewidth,
      width=0.98\linewidth,
    ]
    \tikzpicturedependsonfile{data/symbol-existing-murmur}
    \tikzpicturedependsonfile{data/symbol-existing-prefuzz}
    \tikzpicturedependsonfile{data/symbol-existing-flat}
    \tikzpicturedependsonfile{data/symbol-existing-flat-safe-small}

    \pgfplotstableread{data/symbol-existing-murmur.tbl}{\sorted}
    \addplot[murmurhash] table [x=nkeys, y=missns] {\sorted};
    \addlegendentry{Murmur};

    \pgfplotstableread{data/symbol-existing-prefuzz.tbl}{\sorted}
    \addplot[prefuzzhash] table [x=nkeys, y=missns] {\sorted};
    \addlegendentry{Prefuzz};

    \pgfplotstableread{data/symbol-existing-flat.tbl}{\sorted}
    \addplot[constanthash] table [x=nkeys, y=missns] {\sorted};
    \addlegendentry{Co+Pr};

    \pgfplotstableread{data/symbol-existing-flat-safe-small.tbl}{\sorted}
    \addplot[adaptivehash] table [x=nkeys, y=missns] {\sorted};
    \addlegendentry{\textbf{Adaptive}};
  \end{axis}
\end{tikzpicture}
\caption{MISS timings with \texttt{SYMBOL} \texttt{:EXISTING}}
\label{fig:symbol-existing-miss}
\end{figure}

\begin{figure}[H]
\begin{tikzpicture}
  \begin{axis}[ylabel=ns / del,
      xmin=1, xmode=log, log basis x={2},
      ymin=6, ymax=180, ymode=log, log basis y={2},
      legend pos=north west,
      legend style={nodes={scale=0.7, transform shape}},
      legend cell align={left},
      height=0.6*\linewidth,
      width=0.98\linewidth,
    ]
    \tikzpicturedependsonfile{data/symbol-existing-murmur}
    \tikzpicturedependsonfile{data/symbol-existing-prefuzz}
    \tikzpicturedependsonfile{data/symbol-existing-flat}
    \tikzpicturedependsonfile{data/symbol-existing-flat-safe-small}

    \pgfplotstableread{data/symbol-existing-murmur.tbl}{\sorted}
    \addplot[murmurhash] table [x=nkeys, y=delns] {\sorted};
    \addlegendentry{Murmur};

    \pgfplotstableread{data/symbol-existing-prefuzz.tbl}{\sorted}
    \addplot[prefuzzhash] table [x=nkeys, y=delns] {\sorted};
    \addlegendentry{Prefuzz};

    \pgfplotstableread{data/symbol-existing-flat.tbl}{\sorted}
    \addplot[constanthash] table [x=nkeys, y=delns] {\sorted};
    \addlegendentry{Co+Pr};

    \pgfplotstableread{data/symbol-existing-flat-safe-small.tbl}{\sorted}
    \addplot[adaptivehash] table [x=nkeys, y=delns] {\sorted};
    \addlegendentry{\textbf{Adaptive}};
  \end{axis}
\end{tikzpicture}
\caption{DEL timings with \texttt{SYMBOL} \texttt{:EXISTING}}
\label{fig:symbol-existing-del}
\end{figure}

\end{document}